\newcommand{\E}{\mathbb{E}}
\newcommand{\tE}{\tilde{\mathbb{E}}}
\newcommand{\Q}{\mathbb{Q}}
\newcommand{\T}{\mathcal{T}}
\newcommand{\G}{\mathcal{G}}
\newcommand{\OO}{\mathcal{O}}
\newcommand{\PP}{\mathbb{P}}
\newcommand{\tPP}{\tilde{\mathbb{P}}}
\newcommand{\R}{\mathbb{R}}
\newcommand{\Pbs}{P_{BS}}
\newcommand{\Pb}{p}
\newcommand{\Qb}{q}
\newcommand{\Qe}{{\mathbb{Q}^E}}
\newcommand{\Qm}{{\mathbb{Q}^0}}
\newcommand{\tnu}{\tilde{\nu}}
\newcommand{\hnu}{\hat{\nu}}
\newcommand{\cnu}{\check{\nu}}
\newcommand{\cB}{\check{B}}
\newcommand{\ind}{ {1\!\!\mathrm{I}}}
\newtheorem{theorem}{Theorem}[section]
\newtheorem{lemma}[theorem]{Lemma}
\theoremstyle{definition}
\newtheorem{definition}[theorem]{Definition}
\newtheorem{proposition}[theorem]{Proposition}
\newtheorem{corollary}[theorem]{Corollary}
\numberwithin{equation}{section}
\theoremstyle{remark}
\newtheorem{remark}[theorem]{Remark}
\begin{document}

\title{European Option Pricing with Liquidity Shocks
}

\author{Michael Ludkovski and Qunying Shen}

\address{Department of Statistics and Applied Probability, University of California,\\
Santa Barbara, CA 93106-3110, USA.}
\email{ludkovski@pstat.ucsb.edu, shen@pstat.ucsb.edu}



\maketitle

\begin{abstract}
We study the valuation and hedging problem of European options in a market subject to liquidity shocks. Working within a Markovian regime-switching setting, we model illiquidity as the inability to trade. To isolate the impact of such liquidity constraints, we focus on the case where the market is completely static in the illiquid regime. We then consider derivative pricing
using either equivalent martingale measures or exponential indifference mechanisms.  Our main results concern the analysis of the semi-linear coupled HJB equation satisfied by the indifference price, as well as its asymptotics when the probability of a liquidity shock is small. We then present several numerical studies of the liquidity risk premia obtained in our models leading to practical guidelines on how to adjust for liquidity risk in option valuation and hedging.
\end{abstract}

\keywords{Keywords: liquidity shock; indifference price; exponential utility maximization}

\section{Introduction}
The traditional option pricing theory relies on a key assumption: that markets are always liquid and agents can trade whenever they wish. However, this is not generally true in real financial markets \cite{longstaff2004financial}. There is increasingly a common phenomenon of \emph{liquidity crises} in even well-established securities markets, where liquidity dries out or trading is virtually impossible for a number of reasons, such as political turmoil, war, and financial crises. Two prominent examples include the terrorist attacks of 9/11 which closed all US security exchanges for four days, and the ``flash crash'' of May 6, 2010 \cite{Kirilenko11} when numerous stocks experienced extreme price swings (up to 99\% of value) coupled with minimal trading volume during the course of several hours.

In recent years a number of approaches for dealing with market illiquidity within \emph{optimal investment} frameworks have been developed. Schwartz and Tebaldi \cite{schwartz2006illiquid} considered the optimal portfolio problem with permanent trading interruptions; Diesinger et al. \cite{diesinger2008asset} addressed terminal wealth maximization for CRRA utilities. Ludkovski and Min \cite{ludkovski2010illiquidity} and Gassiat et al.~\cite{PhamGassiat11} analyzed the impact of illiquidity on optimal consumption strategies. 
In this paper we extend this analysis to the problem of \emph{pricing} of derivative securities. To our knowledge, there is no previous analysis for effect of illiquidity on option prices.

In this paper, we define illiquidity as the inability to trade in a timely way. While liquidity crises involve a host of phenomena that affect the market environment, including downward jumps in asset prices, modified asset dynamics and restrictions on trading strategies, here we concentrate on the last item. Namely, to isolate the impact of liquidity shocks,  we consider the extreme scenario whereby (i) trading in assets is completely suspended during the shock and (ii) the underlying asset price remains frozen while the shock is ongoing.

We adopt a regime-switching description of market liquidity which is modeled through a separate Markovian liquidity factor. This aspect of our model connects with the growing literature on Markov-modulated financial markets driven by the recognition that the economic environment is not stable. Some of the pertinent analyses include  \cite{blanchet2005dynamic,elliott2005option, houssou2010indifference, leung2010markov,siu2009option}. The model of Blanchet et al.~\cite{blanchet2005dynamic} effectively corresponded to the case where the liquidity shock was permanent, while Siu  et al.~\cite{siu2009option} and Leung~\cite{leung2010markov} studied  valuation of options using Esscher and minimal entropy martingale measures, respectively.

The main rationale for our choices is to have a well-understood benchmark for comparing derivative valuations and to focus on the effect of trading suspension. In particular, thanks to the special form of liquidity shocks, we can  directly compare to the classical frictionless Black-Scholes model, specifically linking to the options' time-decay. Indeed, a liquidity shock induces an instantaneous (in terms of business-time) jump in the option's Delta, making perfect hedging impossible and exposing investors to \emph{charm} risk, namely the dependence of Delta on time-to-maturity. In reality, asset valuations continue to change (often in an unfavorable way to the investor) and some form of trading may still be possible (e.g., only a short-selling restriction). However, the resulting model is simply somewhere between the classical textbook  description and the one considered herein; similarly, incorporating other features, such as jumps in asset prices, only obscures the final conclusions.

Potential liquidity shocks introduce a new non-traded source of risk and make the asset market incomplete.
 Derivative pricing in incomplete markets largely follows two main approaches. First, one may fix an equivalent martingale measure (EMM) $\Q$ and calculate option prices as expectations under $\Q$.
Some of the common candidates for EMMs include minimal martingale measure \cite{follmer1990hedging}, minimal entropy martingale measure \cite{frittelli2000minimal,miyahara1995canonical}, variance-optimal martingale measure \cite{schweizer1996approximation}, and empirical martingale measure \cite{blanchet2005dynamic}. Second, one may use a utility maximization criterion to obtain indifference prices that take into account the risk aversion of the investor \cite{becherer2004utility,henderson2009utility,houssou2010indifference,ilhan2004portfolio, leung2010markov,rouge2000pricing,zhou2006indifference}. Given the widely reported crash-o-phobia of liquidity crises, this is a useful nonlinear pricing rule complementing the classical expectations-based approach. In this paper we compare both approaches for determining the option's liquidity premia, with a special focus on exponential utility valuation, which leads to wealth-independent indifference prices and natural connections to the minimal entropy measure.

Our main contributions concern the analysis of the Hamilton-Jacobi-Bellman (HJB) equations satisfied by the indifference prices, including explicit asymptotics in terms of the two main model parameters: level of risk-aversion and probability of liquidity shock.
We also present a detailed comparative analysis between the model prices and the classical Black-Scholes prices with the aim of providing rules of thumb for incorporating liquidity shock risk into the valuations. For this purpose, we introduce the new concepts of \emph{implied} and \emph{adjusted} time to maturity that allow translation of liquidity risk premia into the Black-Scholes language.

The rest of paper is organized as follows: in Section \ref{sec:model} we setup the probabilistic model. In Section \ref{sec:emm-pricing} we summarize the EMM approach to pricing under liquidity shocks; Section \ref{sec:utility-max} then describes the utility indifference approach to derivative valuation. Section \ref{sec:asympt} discusses asymptotic analysis of the utility-based pricing mechanism to gain further insight into our formulas; in Section \ref{sec:implied-ttm} we present a handy interpretation of the liquidity premium through implied time-to-maturity. All these results are illustrated in
Section \ref{sec: numeric} with numerical experiments. Most proofs are delegated to the Appendix.

\section{The Market Model}\label{sec:model}
In this section, we describe the market model in the presence of potential liquidity shocks. Fix a complete probability space ($\Omega, \mathcal{F},\PP$), where $\PP$  is the real-world probability. We consider a finite investment time horizon $T < \infty$, which is chosen to coincide with the expiration date of all securities in our model, and a financial market which consists of a stock and a risk-free asset (cash), and is subject to the liquidity shocks. We use $(X_t, \pi_t, S_t)$ defined on $\R\times\R\times\R^+$ to denote the total wealth, stock holdings as proportion of total wealth, and the stock price at time $t \in [0,T]$, respectively.
%
\subsection{Market Dynamics}
We assume the market has two states, liquid (0) and illiquid (1).  We use a continuous-time Markov chain $(M_t)$ to represent the changing liquidity of the financial market with state space $E=\{0,1\}$. The Markov chain $(M_t)$ is a proxy for the fluctuating market liquidity and modulates asset dynamics. Its infinitesimal generator is
\begin{align} \label{eq: A}
A = \begin{pmatrix}
-\nu_{01} & \nu_{ 01}\\
\nu_{10}  & -\nu_{10} \end{pmatrix},
\end{align}
where $\nu_{01}$ and $\nu_{10}$ are for simplicity constants.  Associated with $(M_t)$ are two counting processes $(N_{01}(t))$ and $(N_{10}(t))$ with intensity rates of $\nu_{01} \ind_{\{M_t = 0\}}$ and $\nu_{10} \ind_{\{ M_t = 1\}}$, respectively counting the transitions $0 \rightarrow 1$ and $1 \rightarrow 0$.

We assume that the market value of the stock follows a Markov-modulated Geometric Brownian motion (GBM) model. Without loss of generality, we assume a zero interest rate which is equivalent to working with discounted price processes. In the \textbf{liquid state ($M_t = 0$)} the market dynamics follow the classical Black-Scholes model, and the stock can be traded continuously and frictionlessly. More precisely, we take
\begin{subequations}\label{eq:liquid-dynamics}
\begin{align}
dS_t&=\mu_0 S_tdt+\sigma_0 S_tdW_t, \\
dX_t&=  \mu_0 \pi_t X_t dt+\sigma_0 \pi_t X_t dW_t,
\end{align}
\end{subequations}
where $\mu_0$ and $\sigma_0$ denote the stock drift and volatility and  $(W_t)$ is a standard one-dimensional Brownian motion under $\PP$, which is assumed to be independent of the Markov chain $(M_t)$.
In the \textbf{illiquid state ($M_t = 1$)} the market is static and trading in stock is not permitted: $$dS_t = dX_t = 0.$$

As can be seen, this market model has two sources of uncertainty, one generated by  $(W_t)$ and the other by $(M_t)$.
The liquidity constraint is captured by our set $\mathcal{A}$ of admissible trading strategies. Namely, letting $\mathcal{G}_t := \sigma( S_s, M_s : s \le t)$, a strategy $(\pi_t) \in \mathcal{A}$ is admissible if it is self-financing, $\mathcal{G}$-progressively measurable, satisfies the integrability condition $\E[ \int^T_0\pi^2_t\,dt] < \infty$, and the constraint $d\pi_t = 0$ on $\{t: M_t = 1\}$.

We focus on pricing European contingent claims with maturity date $T$ and payoff $h(S_T)$. For simplicity, we assume that the terminal payoff does not depend on $M_T$; it would be straightforward to incorporate liquidity penalties for option exercise. For later use we recall the complete Black-Scholes market counterpart which has the unique martingale measure $\tPP$ with
\begin{align}\label{eq:gbm}
dS_t = \sigma_0 S_t d\tilde{W}_t,
\end{align}
and where contingent claim prices are given by the classical no-arbitrage formula (note that we use the \emph{time-to-maturity} parametrization)
\begin{align}
\Pbs(T, S) \triangleq \tE \left[ h(S_T) \big| \, S_0 = S \right],
\end{align}
where $\tE \equiv \E^{\tPP}$.

\begin{remark}\label{rem:adding-L}
More generally (see \cite{diesinger2008asset,ludkovski2010illiquidity}) the market need not be static during the illiquid periods.
Thus, one may postulate modified dynamics for the stock during the liquidity shock and also include instantaneous price drops when the liquidity regime changes. For instance, assuming that when $M_t = 1$  the stock follows a GBM with drift $\mu_1$ and volatility $\sigma_1$ and that it experiences a fixed drop of $L\%$ at the beginning of each liquidity shock,  the  analogue of \eqref{eq:liquid-dynamics} would be $S_t = (1-L)S_{t-}$, $X_t = (1-L\pi_{t-})X_{t-}$, and $\pi_t=\pi_{t-}(1-L)/(1-\pi_{t-}L)$ whenever $M_t - M_{t-}=1$ and
\begin{subequations}\label{eq:illiquid-dynamics}
\begin{align}
dS_t&=\mu_1 S_t \, dt+\sigma_1 S_t \, dW_t, \\
dX_t&=  \mu_1 \pi_t X_t\, dt+\sigma_1 \pi_tX_t\, dW_t,\\
d\pi_t &= \left\{\mu_1 \pi_t (1-\pi_t)+\sigma_1^2\pi_t^2(2\pi_t-1)\right\} \, dt + \sigma_1 \pi_t (1-\pi_t) \, dW_t.
\end{align}
\end{subequations}
However, the resulting model is significantly more complex due to the fact that $\pi_t$ is changing during the illiquid regime and hence remains part of the system state. In other words, the state variables when $M_t =1$ are $(\pi_t, X_t, S_t)$ introducing an extra state dimension. We further discuss the resulting pricing equations in Remark \ref{rem:pricing-L}.

\end{remark}

\subsubsection*{Notations}

We use $\E^\PP_{t,S,i}, i=0,1,$ to denote expectation taken under the measure $\PP$ with starting values $S_t =S, M_t=i$. We also let
\begin{align}
\beta(i) & \triangleq \left\{ \begin{aligned} \frac{\mu_0}{\sigma_0} & \quad\text{ if } i=0,\\
0 & \quad\text{ if } i=1,
\end{aligned} \right.\label{eq: sharp} \\
D&=\begin{pmatrix}
\frac{\mu_0^2}{2\sigma_0^2} & 0\\
0  & 0\end{pmatrix},
\end{align}
so that $\beta$'s are the Sharpe ratios in each state (note: the zero risk-free rate).

\section{Option Pricing via Martingale Measures}\label{sec:emm-pricing}
In this section, we summarize the option valuation problem in the context of the EMM methodology. We mainly follow Leung~\cite{leung2010markov} who considered a related regime-switching market.
We begin by characterizing the set $\mathcal{E}$ of all equivalent local martingale measures (EMMs) in this liquidity switching market. A probability measure $\Q$ in $\mathcal{E}$, equivalent to ${\PP}$, should be regarded as a risk neutral measure with respect to both asset price and the liquidity risk. Using Girsanov's Theorem for the Wiener process $(W_t)$, and the Markov Chain $(M_t)$  
we obtain (cf.~Theorem 7 in Leung \cite{leung2010markov}):
\begin{proposition} \label{allEMM}
The set $\mathcal{E}$ of all possible EMMs is given by
\begin{align}\notag
\mathcal{E}& = \Bigl \{\Q^{\alpha}: \exists \{\alpha_t(i,j)\}_{i \neq j, i,j\in E}\ge 0 \textrm{ bounded and $\G$-adapted s.t. } \\ & \quad \frac{d{\Q}^{\alpha}}{d\PP}\Big\vert_{\mathcal{G}_t}=\xi_0(t) \times \xi^\alpha_1(t); \,\E^{\PP}[\xi_0(t) \xi^\alpha_1(t)]=1, \forall t \in [0,T]\Bigr  \},   \notag \qquad\textrm{where } \\
&\xi_0(t) \triangleq \exp \left (-\frac{1} {2} \int^t_0 \beta^2(M_s)ds-\int^t_0\beta(M_s)dW_s\right ),  \notag \\  \label{eq: allEMM}
&\xi^\alpha_1(t) \triangleq \exp \left (-\int^T_0  (\tilde{A}_s(M_s, M_s)- A_s(M_s, M_s))ds \right) \cdot \!\!\prod_{\substack{0 \le s \le t: \\ M_{s-}\neq M_s}} \alpha_s(M_{s-},M_s), \\
&\text{with } \quad \tilde{A}_t(i,j)=\left \{ \begin{array}{rcl} \alpha_t(i,j)A(i,j) & \textrm{ if }& i \neq j, \nonumber \\
                                       -\sum_{k \neq i}\tilde{A}_t(i,k)   & \textrm{ if }& i = j.  \end{array} \right. \qquad\quad \text{and A given in \eqref{eq: A}.}
\end{align}
\end{proposition}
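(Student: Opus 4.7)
The plan is to construct the Radon-Nikodym density in two independent pieces corresponding to the two independent sources of randomness $(W_t)$ and $(M_t)$, then verify (i) that any resulting $\Q^\alpha$ is indeed an EMM, and (ii) that every EMM arises this way. The factorization $d\Q^\alpha/d\PP = \xi_0(t)\xi_1^\alpha(t)$ is natural because $W$ and $M$ are independent under $\PP$, so the densities for the diffusion and the chain can be introduced separately and then multiplied.

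First I would treat the Brownian piece. To make $(S_t)$ a local martingale we must kill the drift $\mu_0$ in the liquid state; in the illiquid state the stock is frozen so no compensation is needed. This justifies the definition of $\beta$ in \eqref{eq: sharp} and, by the usual Girsanov theorem, the Dol\'eans-Dade exponential $\xi_0(t)$ yields a new measure under which $dW_t + \beta(M_t)\,dt$ is a Brownian motion on $\{M_t=0\}$ and under which $S$ is a local martingale. A standard argument (for instance, localizing or applying Novikov's criterion in view of the boundedness of $\beta$) shows $\xi_0$ is a true martingale.

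Next I would treat the Markov-chain piece using the change-of-measure theorem for pure-jump processes (see, e.g., Br\'emaud or Elliott). For a bounded nonnegative $\G$-adapted family $\{\alpha_t(i,j)\}_{i\neq j}$, the process $\xi_1^\alpha(t)$ in \eqref{eq: allEMM} is the exponential martingale whose effect is to rescale the jump intensities from $A(i,j)$ to $\tilde{A}_t(i,j) = \alpha_t(i,j) A(i,j)$, while preserving the Brownian law. Equivalence of measures on $\mathcal{G}_T$ forces $\alpha_t(i,j) > 0$ whenever $A(i,j) \ne 0$ (otherwise the law of the chain's path would not be absolutely continuous), and nonnegativity plus boundedness then ensure $\xi_1^\alpha$ is a true $\PP$-martingale with $\E^\PP[\xi_0(t)\xi_1^\alpha(t)] = 1$ since $\xi_0$ and $\xi_1^\alpha$ are orthogonal martingales (the Brownian has continuous paths, the chain is of finite variation). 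Under $\Q^\alpha$, $S$ is still a local martingale because the integrand process defining $\xi_1^\alpha$ is orthogonal to the continuous martingale part driving $S$.

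The harder direction is the converse: every $\Q \in \mathcal{E}$ arises as some $\Q^\alpha$. Here I would invoke the martingale representation property for the filtration generated by an independent Brownian motion and a finite-state continuous-time Markov chain. Concretely, the compensated jump martingales $N_{ij}(t) - \int_0^t \nu_{ij}\ind_{\{M_s=i\}}\,ds$ together with $W_t$ span the space of $\G$-local martingales. Writing the density process $Z_t = \E^\PP[d\Q/d\PP \mid \mathcal{G}_t]$ and applying this representation to the positive martingale $Z$, one obtains a stochastic logarithm of the required product form. Identifying the Brownian integrand with $-\beta(M_\cdot)$ (this identification uses the EMM requirement that $S$ be a $\Q$-local martingale, which constrains the drift-removal term to equal $\mu_0/\sigma_0$ on $\{M_s=0\}$ and be arbitrary on $\{M_s=1\}$, where $S$ is already constant so without loss of generality we take it to be zero) and reading off the jump integrand as $\alpha_s(i,j) - 1$ yields the stated parametrization, completing the characterization.
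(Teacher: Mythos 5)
The paper itself does not prove this proposition; it is quoted from Leung (Theorem 7, with the modification that the stock is frozen in state 1), so your proof is being compared to what a careful reader of that reference would reconstruct. Your high-level plan --- Girsanov for the Brownian piece plus an intensity-rescaling martingale for the Markov chain, multiplied because $W$ and $M$ are independent, orthogonality of the continuous and purely discontinuous parts, and a converse via martingale representation --- is exactly the right route and matches what the cited source does.

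There is, however, a real gap in your converse. You write that the drift-removal integrand is ``arbitrary on $\{M_s=1\}$, \ldots so without loss of generality we take it to be zero.'' If this were really an arbitrary choice, the proposition would be false: it claims to parametrize \emph{all} of $\mathcal{E}$, yet a free Brownian integrand on the illiquid set would produce a whole family of distinct density processes not of the form $\xi_0\cdot\xi_1^\alpha$. The vanishing of the Brownian integrand on $\{M_s=1\}$ is not a normalization but is forced by the filtration: the density must be $\G_t$-measurable, with $\G_t=\sigma(S_s,M_s:s\le t)$, and on $\{M_s=1\}$ the increments of $W$ are not observable through $(S,M)$ since $dS_t=0$ there. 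Equivalently, the correct martingale basis for the $\G$-representation theorem is the pair consisting of $\hat W_t := \int_0^t \ind_{\{M_s=0\}}\,dW_s$ (whose bracket is $\int_0^t\ind_{\{M_s=0\}}\,ds$) and the compensated jump martingales of $(M_t)$; writing the density's stochastic logarithm against $\hat W$ rather than $W$ makes the integrand on $\{M_s=1\}$ disappear automatically and leaves $\mu_0/\sigma_0$ on $\{M_s=0\}$ as the unique drift-killing choice. Replacing the ``WLOG'' sentence with this adaptedness argument closes the gap and makes the parametrization genuinely exhaustive. As a minor point, you correctly note that equivalence (not mere absolute continuity) forces $\alpha_t(i,j)>0$; this is slightly stronger than the $\ge 0$ written in the statement and is worth flagging.
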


In Proposition \ref{allEMM}, the exponential martingale $(\xi_0(t))$ is the Girsanov transform that makes $(S_t)$ a $\Q^{\alpha}$-martingale, while $(\xi^\alpha_1(t))$ is the exponential martingale that transforms the generator of $(M_t)$.  To preserve the Markovian property of $(M_t)$ under the new measure ${\Q}^{\alpha}$, we henceforth require that all $(\alpha_t(i,j))$ be Markovian. Then,  under $\Q^{\alpha}$, the generator matrix of $(M_t)$ is $\tilde{A}=[\tilde{A}_t(i,j)]_{i,j\in E}$. The collection $\{\alpha_t(i,j)\}_{i \neq j}$ can be considered as the risk premium factors for the liquidity switching risk \cite{leung2010markov}.

Using the law of iterated expectations by expressing the price as a discounted risk neutral expectation over all possible paths of the liquidity state $(M_t)$ we obtain
\begin{proposition} \label{prop_EMMprice}
Given a measure $\Q^{{\alpha}} \in \mathcal{E}$, the option price $p^{{\alpha}}(t,S,i)$ is given by
\begin{align} \label{eq: gEMMprice}
p^{\alpha}(t,S,i) &= \E^{{\alpha}}_{t,S,i} \left[ h(S_T) \right] = e^{-\int^T_t\tilde{\nu}_{i,1-i}(u)du} P_{BS}(T-t,S) \nonumber \\ & \quad + \int^T_t\tilde{\nu}_{i,1-i}(\tau) e^{-\int^\tau_t \tilde{\nu}_{i,1-i}(u)du}\E^{\alpha}_{t,S,i}[ p^\alpha(\tau, S_\tau, 1-i)]d\tau,
\end{align}
where $\tilde{\nu}_{i,j}(t) \equiv \tilde{A}_t(i,j) = \alpha_t(i,j)A(i,j)$, $i \neq j$.
\end{proposition}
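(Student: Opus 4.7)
The first equality is the definition of the EMM price under $\Q^\alpha$. For the second, the plan is to exploit the strong Markov property of the joint process $(S_t, M_t)$ under $\Q^\alpha$ and condition on the first regime transition after $t$, namely $\tau_1 := \inf\{s > t : M_s \neq M_t\}$.

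First I would identify the conditional law of $\tau_1$ given $M_t = i$. Because the Markovian choice of $\alpha$ together with $\xi^\alpha_1$ from Proposition \ref{allEMM} makes $(M_t)$ Markov under $\Q^\alpha$ with generator $\tilde A$, the instantaneous rate of leaving state $i$ at time $s$ equals $\tilde{\nu}_{i,1-i}(s) = \alpha_s(i,1-i)A(i,1-i)$. Hence $\tau_1$ has density
\begin{align*}
f_i(\tau) = \tilde{\nu}_{i,1-i}(\tau)\exp\Bigl(-\int_t^\tau \tilde{\nu}_{i,1-i}(u)\,du\Bigr), \qquad \tau \ge t,
\end{align*}
with the atom $\exp\bigl(-\int_t^T \tilde{\nu}_{i,1-i}(u)\,du\bigr)$ on the survival event $\{\tau_1 > T\}$. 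Splitting the expectation gives
\begin{align*}
\E^\alpha_{t,S,i}[h(S_T)] = \E^\alpha_{t,S,i}\bigl[h(S_T)\ind_{\{\tau_1 > T\}}\bigr] + \int_t^T f_i(\tau)\,\E^\alpha_{t,S,i}\bigl[h(S_T)\,\big|\,\tau_1=\tau\bigr]\,d\tau.
\end{align*}

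Next I would evaluate each summand. On $\{\tau_1 > T\}$ the chain stays in state $i$ throughout $[t,T]$, and since the Radon--Nikodym density in Proposition \ref{allEMM} factorizes as $\xi_0(t)\cdot \xi^\alpha_1(t)$, the Brownian motion and the chain remain independent under $\Q^\alpha$. Conditioning on the path of $(M_s)$ being constantly $i$ therefore leaves the $\Q^\alpha$-dynamics of $(S_s)$ unchanged: for $i=0$ the discounted stock follows the driftless GBM \eqref{eq:gbm}, yielding $P_{BS}(T-t,S)$; for $i=1$ the stock is frozen, so $S_T = S$ and the conditional expectation is $h(S)$, which we identify with $P_{BS}(T-t,S)$ in the zero-volatility limit consistent with \eqref{eq: gEMMprice}. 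On the complementary event $\{\tau_1 = \tau\}$, the strong Markov property of $(S,M)$ at the stopping time $\tau_1$ gives
\begin{align*}
\E^\alpha_{t,S,i}\bigl[h(S_T)\,\big|\,\tau_1=\tau,\,S_{\tau_1}\bigr] = \E^\alpha_{\tau,S_\tau,1-i}[h(S_T)] = p^\alpha(\tau,S_\tau,1-i),
\end{align*}
so tower-averaging over $S_\tau$ and integrating against $f_i$ produces the integral term in \eqref{eq: gEMMprice}.

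The main obstacle is not the algebraic decomposition but the justification that, conditional on $\{\tau_1 > T\}$ or on $\tau_1 = \tau$, the pre-jump dynamics of $(S_s)$ are precisely the single-regime dynamics of regime $i$. This rests on three ingredients: the $\PP$-independence of $(W_t)$ and $(M_t)$ assumed in Section \ref{sec:model}, the product form of the density in Proposition \ref{allEMM} (so that $(W_t)$'s Girsanov drift depends only on $M$-state and not on future jump times), and the Markovian restriction on $\alpha$. Once these are in place, the first-jump decomposition for Markov-modulated diffusions applies verbatim and closes the argument.
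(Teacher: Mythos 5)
Your proof is correct and follows the same first-jump decomposition as the paper: condition on the first regime transition $\tau_1$ after $t$, identify the exponential survival probability and density from the $\Q^\alpha$-intensity $\tilde\nu_{i,1-i}(\cdot)$, and split the expectation into the no-jump and one-jump events, invoking the independence of $(W_t)$ and $(M_t)$ under $\Q^\alpha$ and the strong Markov property at $\tau_1$. Your paragraph justifying why the pre-jump dynamics of $(S_s)$ are the single-regime dynamics is a slightly more careful account of what the paper dispatches with its one-line ``independence of $(S_t)$ and $(N_{01}(t))$'' remark, and your observation that $P_{BS}(T-t,S)$ should be read as $h(S)$ when $i=1$ (stock frozen) is a sensible reconciliation of a small imprecision in the stated formula that the paper sidesteps by proving only the $i=0$ case.
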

A formal proof of Proposition \ref{prop_EMMprice} is given in \ref{app_EMMprice}.


In the special case where $\alpha \equiv \mathbf{1}$ is a unity transformation, we obtain the Minimal Martingale Measure (MMM)
$\Qm$ (with $\frac{d\Qm}{d\PP}\big|_{\mathcal{G}_t}= \xi_0(t)$) \cite{elliott2005option,blanchet2005dynamic}, which corresponds to a zero risk premium associated with the liquidity risk. By Girsanov's theorem, under $\Qm$, $(S_t)$ follows
\begin{align}
dS_t &=1_{\{M_t =0\}} \sigma_0 S_t\,d\tilde{W}_t,
\end{align}
where $(\tilde{W}_t)$ is a standard Brownian motion under $\Qm$ and the $\Qm$-generator of $(M_t)$ remains $A$. We use $p^{MM}(t,S)$ to denote the price under MMM conditional on $M_t = 0$.
Another special case is furnished by the minimal entropy martingale measure (MEMM), $\Qe$.

\begin{definition} The minimal entropy martingale measure, $\Qe$, minimizes the relative entropy with respect to $\PP$ over the set of EMMs $\mathcal{E}$:
\begin{align}
&\Qe={\arg\min}_{{\Q} \in \mathcal{E}} H_{T}({\Q} \vert \PP), \qquad\text{where}\\
&H_{T}(\Q\vert \PP) \triangleq \left \{ \begin{array}{rl} \E^{\Q}\left [ \log\frac{d{\Q}}{d\PP} | \,\mathcal{G}_T \right], & {\Q} \ll \PP,  \\
                                      +\infty, & \textrm{ otherwise }.  \end{array} \right. \label{eq:H}
\end{align}
\end{definition}

\begin{definition}\label{de_Fs} Consider the ODE system $F'(t)=(D-A)F(t)$, with $F(t) \equiv (F_1(t),F_2(t))$ and  $F(T)=[1, 1]^\top$. The solutions $F_0(t), F_1(t)$ are explicitly given by
\begin{align}
F_0(t) &= c_1e^{\lambda_1 t}+c_2e^{\lambda_2 t}, \\
F_1(t) &= \frac{1}{\nu_{01}}\left\{ c_1(d_0+\nu_{01}-\lambda_1)e^{\lambda_1 t}+c_2(d_0+\nu_{01}-\lambda_2)e^{\lambda_2 t} \right\}, \\
\qquad\text{where}\quad d_0 & \triangleq D(1,1)=\frac{\mu_0^2}{2\sigma_0^2}, \\
\lambda_{1,2} &=\frac{d_0+\nu_{01}+\nu_{10}\pm \sqrt{(d_0+\nu_{01}+\nu_{10})^2-4d_0\nu_{10}}}{2}, \nonumber \\
c_1 &= \frac{\lambda_2-d_0}{\lambda_2-\lambda_1}e^{-\lambda_1T}, \qquad\text{and}\qquad c_2 = \frac{\lambda_1-d_0}{\lambda_1-\lambda_2}e^{-\lambda_2T}. \nonumber
\end{align}
\begin{remark} \label{re_Fs}
Let $F_2(t) =  e^{-d_0(T-t)}$. Through simple algebra, we have $0 < \lambda_2 < d_0 < \lambda_1$, which leads to $F_1(t) > F_0(t) > F_2(t) $, $\forall t \in [0,T]$.
\end{remark}
\end{definition}

\noindent Applying Theorem 9 in Leung~\cite{leung2010markov} we immediately obtain the following characterization of $\Qe$.
\begin{proposition} \label{the_MEMM}  The MEMM is $\Qe={\Q}^{\alpha^E}$, where the minimal entropy risk factor ${\alpha}^E$ is given by ${\alpha}^E(i,j)=\frac{F_j(t)}{F_i(t)}, i,j \in \{0,1\}$. Thus,
 under $\Qe$ the Markov chain $(M_t)$ becomes a time-inhomogeneous Markov chain with transition intensities
%
\begin{align}\label{eq:Ae}
\hnu_{i,1-i}(t) = \nu_{i,1-i} \frac{F_{1-i}(t)}{F_i(t)}, \qquad i=0,1.
\end{align}
\end{proposition}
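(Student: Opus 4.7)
The plan is to reduce the statement to the general characterization of the MEMM for regime-switching diffusions given by Theorem 9 of Leung~\cite{leung2010markov}, and then to solve the resulting linear ODE explicitly to identify $F_0, F_1$ and read off $\alpha^E$.

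First, I would check that our model is a particular instance of the framework in \cite{leung2010markov}: a Brownian-driven asset whose coefficients are modulated by a finite-state Markov chain, with a Sharpe ratio vector $\beta(\cdot)$ encoded by \eqref{eq: sharp} and chain generator $A$ from \eqref{eq: A}. The admissibility of EMMs and the Girsanov decomposition $d\Q^\alpha/d\PP = \xi_0(t) \xi^\alpha_1(t)$ already established in Proposition \ref{allEMM} matches exactly the class considered there.

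Second, I would invoke the general dual representation: minimizing the relative entropy $H_T(\Q^\alpha|\PP)$ over $\{\alpha_t(i,j)\}$ splits, via the explicit form of $\xi_0 \xi^\alpha_1$ and the compensation formula for the jump martingales of $(N_{ij})$, into a Brownian contribution $\frac{1}{2}\E^{\Q^\alpha}[\int_0^T \beta^2(M_s)\, ds]$ plus a jump contribution involving $\sum_{i\neq j} \alpha_t(i,j)A(i,j)[\log \alpha_t(i,j) - 1] + A(i,j)$. Optimizing the latter pointwise and combining with the Feynman-Kac representation for the Brownian piece yields an HJB/ODE system of the form $F'(t) = (D-A)F(t)$, $F(T) = (1,1)^\top$, with optimizer $\alpha^E_t(i,j) = F_j(t)/F_i(t)$. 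This is precisely the statement of Leung's Theorem 9 applied to the two-state generator \eqref{eq: A} and the diagonal matrix $D$ with entries $\beta^2(i)/2$, i.e.\ $D(0,0) = d_0 = \mu_0^2/(2\sigma_0^2)$ and $D(1,1) = 0$.

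Third, I would solve the $2\times 2$ linear ODE explicitly. The matrix $D-A$ has characteristic polynomial $\lambda^2 - (d_0+\nu_{01}+\nu_{10})\lambda + d_0\nu_{10}$, whose roots are the $\lambda_{1,2}$ in Definition \ref{de_Fs}. Writing $F_0(t) = c_1 e^{\lambda_1 t} + c_2 e^{\lambda_2 t}$ and using the first row of the ODE to express $F_1$ in terms of $F_0$ yields the stated formula for $F_1$; the constants $c_1, c_2$ are fixed by the terminal condition $F_0(T) = F_1(T) = 1$, giving the displayed expressions. Finally, substituting $\alpha^E(i,1-i) = F_{1-i}(t)/F_i(t)$ into $\tilde A_t(i,1-i) = \alpha^E_t(i,1-i) A(i,1-i) = \nu_{i,1-i} F_{1-i}(t)/F_i(t)$ delivers \eqref{eq:Ae}.

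The only step requiring care is the bookkeeping that translates Leung's general ODE into our matrix $D - A$ with the correct signs and terminal condition; once that dictionary is set, the explicit diagonalization of a $2\times 2$ constant-coefficient system and the identification of $\alpha^E$ as the ratio $F_{1-i}/F_i$ are routine.
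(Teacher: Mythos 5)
Your proposal is correct and follows essentially the same route as the paper: both reduce the claim to Theorem 9 of Leung \cite{leung2010markov}, identify the Merton ODE $F'(t)=(D-A)F(t)$ with the appropriate $D$, and read off $\alpha^E(i,j)=F_j(t)/F_i(t)$. The extra detail you supply (the entropy decomposition into Brownian and jump pieces, and the explicit diagonalization of the $2\times 2$ system) fills in mechanics that the paper treats as immediate once Definition \ref{de_Fs} is in hand, but the argument is the same.
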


Proposition \ref{the_MEMM} presents an explicit form of the MEMM. By Remark \ref{re_Fs}, under $\Qe$, $(M_t)$ is more likely to be in state 1, $\hnu_{01} > \nu_{01}, \hnu_{10} < \nu_{10}$. That is, under the MEMM $\Qe$, the market is more \emph{vulnerable} to a lengthier illiquid shock.

\begin{corollary}\label{MEMMprice}
By Proposition \ref{prop_EMMprice},  the price under the minimal entropy martingale measure $\Qe$, denoted $p^E(t,S)$ for the case $M_t = 0$, is given by substituting $\tnu_{01}$ with $\hnu_{01}$, and $\tnu_{10}$ with $\hnu_{10}$ in formula (\ref{eq: gEMMprice}).
\end{corollary}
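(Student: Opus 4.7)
The plan is to observe that Corollary \ref{MEMMprice} is an immediate specialization of Proposition \ref{prop_EMMprice} to the measure $\Q^{\alpha^E} = \Qe$ identified in Proposition \ref{the_MEMM}. First I would read off the MEMM transition rates: with $\alpha^E(i,j) = F_j(t)/F_i(t)$, the generator entries of $(M_t)$ under $\Qe$ are
\begin{align*}
\tilde{\nu}^E_{i,1-i}(t) \;\equiv\; \tilde{A}^E_t(i,1-i) \;=\; \alpha^E(i,1-i)\,A(i,1-i) \;=\; \nu_{i,1-i}\,\frac{F_{1-i}(t)}{F_i(t)} \;=\; \hat{\nu}_{i,1-i}(t),
\end{align*}
so that the MEMM rates coincide precisely with the $\hat{\nu}_{i,1-i}(t)$ introduced in \eqref{eq:Ae}.

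Next I would invoke Proposition \ref{prop_EMMprice} at $\alpha = \alpha^E$. That proposition's pricing formula \eqref{eq: gEMMprice} depends on the choice of EMM only through the intensity functions $\tilde{\nu}_{i,1-i}(t)$ and through the $\Q^{\alpha}$-law of $(S_t)$; the latter is unaffected by $\alpha$ because the Wiener-component Girsanov factor $\xi_0(t)$ is common to every measure in $\mathcal{E}$ and already forces $dS_t = \ind_{\{M_t=0\}}\sigma_0 S_t\,d\tilde W_t$. In particular, the no-shock contribution on $[t,T]$ still reduces to the standard Black--Scholes term $P_{BS}(T-t,S)$. Setting $i = 0$ and substituting $\tnu_{01} \mapsto \hnu_{01}$, $\tnu_{10} \mapsto \hnu_{10}$ in \eqref{eq: gEMMprice} therefore yields the stated expression for $p^E(t,S)$.

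There is essentially no technical obstacle. The only point meriting a line of verification is that $\alpha^E$ lies in the admissible family of Proposition \ref{allEMM}: because $\alpha^E_t(i,j) = F_j(t)/F_i(t)$ is deterministic, continuous, and strictly positive on $[0,T]$ (using $F_0,F_1 > 0$ from Remark \ref{re_Fs}), it is automatically $\G$-adapted, bounded, and Markovian, so that Proposition \ref{prop_EMMprice} genuinely applies with time-inhomogeneous rates $\hnu_{i,1-i}(t)$. Once this is noted, the corollary follows by literal substitution.
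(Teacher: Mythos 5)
Your argument is correct and matches the paper's intent: the paper treats this corollary as an immediate consequence of Propositions \ref{prop_EMMprice} and \ref{the_MEMM}, and you have simply spelled out that $\tilde{\nu}^E_{i,1-i}(t) = \hnu_{i,1-i}(t)$ and then substituted, with the additional (welcome) verification that $\alpha^E$ is a bounded, deterministic, strictly positive Markovian risk-premium factor in the admissible class of Proposition \ref{allEMM}, so that the general pricing formula genuinely applies.
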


\section{Exponential Indifference Pricing}\label{sec:utility-max}
To account for investor risk preferences, we next investigate a nonlinear pricing mechanism in the presence of liquidity shocks.
In this section, we derive the exponential indifference price of $h(S_T)$ using methods of stochastic control and nonlinear partial differential equations. Throughout this analysis, the exponential utility function is used:
\begin{align}
u(x)=-e^{-\gamma x},
\end{align}
where $\gamma >0$ is the investor's risk aversion parameter. Exponential utility has various advantages, such as yielding prices that are wealth independent, and being related via the dual representation to the minimal entropy martingale measure.

A starting point for the approach of utility-based pricing are the value functions $\hat{U}^i(t,X,S)$  for the optimal investment problem in the presence of a contingent claim,
\begin{align}\label{defn:Uhat}
 \hat{U}^i(t,X,S) \triangleq \sup_{(\pi_t) \in \mathcal{A}} \E^\PP_{t,X,S,i} \left[-e^{-\gamma (X_T+h(S_T))} \right], \qquad i=0,1.
 \end{align}
Standard stochastic control methods imply the following:

\begin{proposition}\label{prop_hatU}
The optimal value functions $\hat{U}^i(t,X,S)$ with $(t,X,S) \in [0, T] \times \R \times\R^+$  for a holder of a bounded contingent claim paying off $h(S_T)$ at terminal time $T$ are given by
$$
\hat{U}^i(t,X,S) = -e^{-\gamma X} e^{-\gamma R^i(t,S)},
$$
where $R^i(t,S)$ are the unique viscosity solutions of the coupled semi-linear system
\begin{align} \label{eq: Rhat}
\left\{ \begin{aligned}
R^0_t  +  \frac{1}{2}\sigma_0^2 S^2R^0_{SS} - \frac{\nu_{01}}{\gamma}e^{-\gamma (R^1-R^0)} + \frac{(d_0+\nu_{01})}{\gamma}=0, \\
R^1_t -\frac{\nu_{10}}{\gamma}e^{-\gamma (R^0-R^1)}+\frac{\nu_{10}}{\gamma}=0,
\end{aligned}\right.
\end{align}
with the terminal conditions $\hat{R}^i(T,S)=h(S)$, $i = 0, 1$.
\end{proposition}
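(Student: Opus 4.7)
The approach is standard stochastic control: derive the coupled HJB system from dynamic programming, exploit the exponential ansatz to factor out wealth, and then argue existence and uniqueness in the viscosity sense.

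First I would write the HJB system heuristically. In the liquid state $i=0$, the regime-switching DPP gives the usual Merton-type HJB with a quadratic supremum over $\pi$, a diffusion in $S$, and the jump term $\nu_{01}(\hat{U}^1-\hat{U}^0)$. In the illiquid state $i=1$ the state $(X,S)$ is frozen and no control acts, so the equation collapses to $\hat{U}^1_t+\nu_{10}(\hat{U}^0-\hat{U}^1)=0$, both carrying the terminal condition $-e^{-\gamma(X+h(S))}$. Motivated by the wealth-independence of exponential utility (reinforced by the fact that $X$ is literally frozen during illiquidity), I would try the ansatz $\hat{U}^i(t,X,S)=-e^{-\gamma X}e^{-\gamma R^i(t,S)}$. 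Substitution into the $i=1$ equation, after dividing by $-\gamma\hat{U}^1$, directly yields the second line of \eqref{eq: Rhat}. For the $i=0$ equation, the pointwise maximizer is $\pi^\ast X=\mu_0/(\gamma\sigma_0^2)-SR^0_S$; when plugged back, the $(R^0_S)^2$ terms and the $\mu_0 SR^0_S$ drift terms cancel exactly, leaving $d_0/\gamma$ from the Sharpe-ratio maximum. Combined with the jump term, this produces the first line of \eqref{eq: Rhat}.

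To make the identification rigorous I would: (i) check $\hat{U}^i$ is bounded and continuous using boundedness of $h$ and standard admissibility estimates; (ii) establish the DPP for this regime-switching control problem so that $\hat{U}^i$ is a viscosity solution of the HJB system; (iii) transfer the viscosity property to $R^i$ through composition of test functions with $x\mapsto -\log(-x)/\gamma$; and (iv) prove uniqueness via a comparison principle for the coupled semi-linear system. The nonlinearities $e^{-\gamma(R^1-R^0)}$ and $e^{-\gamma(R^0-R^1)}$ are locally Lipschitz on bounded regions, and the off-diagonal coupling is quasi-monotone, which is exactly the structural ingredient needed to run a Crandall--Ishii doubling-variables argument for cooperative systems.

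The main obstacle will be the comparison step, because the $R^1$ equation is degenerate in $S$ (a pure first-order ODE in $t$ for each fixed $S$). A clean way around this is to solve that equation in closed form: the substitution $u^i=e^{-\gamma R^i}$ linearizes it to $u^1_t=\nu_{10}(u^1-u^0)$, which integrates to $u^1(t,S)=e^{-\nu_{10}(T-t)}e^{-\gamma h(S)}+\nu_{10}\int_t^T e^{-\nu_{10}(s-t)}u^0(s,S)\,ds$. Substituting back reduces \eqref{eq: Rhat} to a single nonlocal-in-time semi-linear parabolic PDE for $R^0$, to which standard viscosity comparison applies; uniqueness of $R^1$ then follows automatically from the representation.
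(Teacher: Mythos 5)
Your heuristic derivation is correct and matches the paper's: the coupled HJB system (eq.~\eqref{eq: Uhat} in the appendix), the exponential ansatz, the optimizer $\pi^\ast X=\mu_0/(\gamma\sigma_0^2)-SR^0_S$, and the cancellation of the $(R^0_S)^2$ and $\mu_0 SR^0_S$ terms leaving $d_0/\gamma$ all check out. Where you diverge from the paper is the existence/uniqueness machinery. The paper avoids establishing a DPP and viscosity comparison for the coupled system entirely: it replaces the problem by a doubly-iterative scheme --- first truncating the number of liquidity shocks to at most $k$ (which decouples the system, since $R^{(0,k)}$ depends only on the already-constructed $R^{(1,k-1)}$), then, within each $k$, linearizing in $j$ to tame the exponential nonlinearity, bounding the iterates by explicit sub/super-solutions $\underline{R}\le R^{(0,1)(j)}\le\overline{R}$, and passing to the limit via Zhou's monotone scheme. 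It then identifies the limit with $\hat U^i$ by a bare-hands verification lemma (Lemma~\ref{lem: J}) rather than DPP, and invokes Diesinger et al.\ for the $k\to\infty$ convergence. Your proposed route --- DPP, viscosity property, quasi-monotone comparison, and the clean observation that $u^1=e^{-\gamma R^1}$ linearizes the illiquid equation so that $R^1$ can be integrated out --- is also defensible, and in fact the substitution $u^1=e^{-\gamma R^1}$ is already implicit in the paper's ansatz ($\hat U^{(1,k)}=-e^{-\gamma X}R^{(1,k)}$, so the paper's $R^{(1,k)}$ is your $u^1$). But be aware of two non-trivial steps you are glossing over: (a) the DPP for a regime-switching control problem with a trading-suspension constraint is not off-the-shelf, which is precisely why the paper builds everything through verification; and (b) after integrating out $R^1$ you obtain a nonlocal-in-time integro-differential PDE for $R^0$ (the source involves $\int_t^T e^{-\nu_{10}(s-t)}e^{-\gamma R^0(s,S)}\,ds$), for which the doubling-variables comparison needs a custom argument handling the memory term --- and you still need a priori bounds on $R^0$ to get local Lipschitz control of the exponential nonlinearity, which the paper supplies through $\overline R,\underline R$. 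Finally, note that the paper's heavier construction buys something your sketch does not: Theorem~\ref{thm:classical-soln}, the classical-solution iterative approximation, which is what makes the whole thing numerically computable and is used later in the paper.
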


As a Corollary, taking $h(S) \equiv 0$, we recover the value functions $\hat{V}^i(t,X)$, $i=0,1,$ for the Merton problem of optimal investment in the liquidity switching market.

\begin{corollary}\label{prop_newMerton}
The value function for optimal investment subject to illiquidity shocks is
\begin{align}\label{eq: Vhat}
\hat{V}^i(t,X)& \triangleq \sup_{(\pi_t) \in \mathcal{A}} \mathbb \E^\PP_{t,X,i}[-e^{-\gamma X_T}]=-e^{-\gamma X}F_i(t), \qquad i=0,1, 
\end{align}
where $F_0(t)$ and $F_1(t)$ are given in  (\ref{de_Fs}).
\end{corollary}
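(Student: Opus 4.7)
The plan is to specialize Proposition \ref{prop_hatU} with the trivial payoff $h \equiv 0$, reduce the coupled semi-linear PDE system \eqref{eq: Rhat} to a linear ODE system via an exponential substitution, and identify that ODE with the one in Definition \ref{de_Fs}.

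First, I would observe that with $h \equiv 0$ the terminal conditions $R^i(T,S) = 0$ are independent of $S$, and so is the system \eqref{eq: Rhat} itself once we drop the $R^0_{SS}$ term. This suggests that the solution $R^i$ is actually $S$-independent; to make this rigorous, one may either invoke uniqueness of the viscosity solution (so $R^i(t,S) = R^i(t)$ for functions depending on $t$ alone) or, equivalently, construct the candidate in closed form and verify it directly. In either case, the PDE system collapses to the ODE system
\begin{align*}
R^0_t - \tfrac{\nu_{01}}{\gamma} e^{-\gamma(R^1 - R^0)} + \tfrac{d_0 + \nu_{01}}{\gamma} = 0, \qquad
R^1_t - \tfrac{\nu_{10}}{\gamma} e^{-\gamma(R^0 - R^1)} + \tfrac{\nu_{10}}{\gamma} = 0,
\end{align*}
with $R^i(T) = 0$.

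Next, I would introduce the exponential change of variable $f_i(t) := e^{-\gamma R^i(t)}$, so that $f_i(T) = 1$ and $f_i'(t) = -\gamma R^i_t(t)\, f_i(t)$. Multiplying each equation above by $-\gamma f_i$ and using $e^{-\gamma(R^j - R^i)} = f_j/f_i$ gives the linear system
\begin{align*}
f_0'(t) &= (d_0 + \nu_{01}) f_0(t) - \nu_{01} f_1(t), \\
f_1'(t) &= -\nu_{10} f_0(t) + \nu_{10} f_1(t),
\end{align*}
which in matrix form is $f'(t) = (D - A) f(t)$ with $f(T) = (1,1)^\top$. This is precisely the ODE characterizing $F_0, F_1$ in Definition \ref{de_Fs}, so by uniqueness of solutions to linear ODEs, $f_i \equiv F_i$ and hence $R^i(t) = -\tfrac{1}{\gamma} \log F_i(t)$. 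Substituting into the representation $\hat{U}^i(t,X) = -e^{-\gamma X} e^{-\gamma R^i(t)}$ from Proposition \ref{prop_hatU} yields $\hat{V}^i(t,X) = -e^{-\gamma X} F_i(t)$, as claimed.

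The computation is essentially routine once the right substitution is identified; the only mildly delicate point is justifying the reduction from PDE to ODE, which as noted follows from uniqueness (or by directly checking that $R^i(t,S) = -\tfrac{1}{\gamma} \log F_i(t)$ solves \eqref{eq: Rhat} with $h \equiv 0$, thereby bypassing the need to invoke uniqueness a priori). Positivity of $F_i$, guaranteed by Remark \ref{re_Fs}, ensures the logarithm is well-defined throughout $[0,T]$.
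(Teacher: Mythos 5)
Your proposal is correct and follows essentially the same route as the paper's own proof: specialize Proposition \ref{prop_hatU} to $h \equiv 0$, note the resulting $S$-independence of $R^i$, substitute $e^{-\gamma R^i(t)}$, and recover the linear ODE system $F'(t) = (D-A)F(t)$ of Definition \ref{de_Fs}. The paper is terser — it simply states the observation and the substitution — while you make the reduction and the change of variables explicit and note that positivity of $F_i$ (from Remark \ref{re_Fs}) keeps the logarithm well-defined; these are fleshed-out details rather than a genuinely different argument.
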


\begin{proof} This is a special case of  Proposition \ref{prop_hatU} with zero options. We observe that if the terminal condition is zero, the solutions of \eqref{eq: Rhat} are independent of $(S_t)$, $R^i = R^i(t)$. Substituting $F_i(t) = e^{-\gamma R^i(t)}$ and simplifying then yields \eqref{de_Fs} which are easily seen to be smooth.  The link between the Merton problem and relative entropy minimization that was originally used to derive $F_i$'s in \eqref{de_Fs} is well-known, see e.g.~\cite[Theorem 11]{leung2010markov}.
\end{proof}

\begin{remark} Recall that without liquidity shocks, i.e., $\nu_{01}=0$, the Merton value function is
$
\hat{V}(t,X)= -e^{-\gamma X}F_2(t).
$
The possibility of illiquidity  increases the risk associated with the investment behavior. As one would expect,  the discount rate has to be therefore increased relative to the perfectly liquid case,  $F_i > F_2$, see  Remark \ref{re_Fs}.
\end{remark}

\subsection{Iterative Approximation}
Since we are unable to provide an \emph{a priori} regularity estimate on $R^i$ in \eqref{eq: Rhat}, we instead pursue a decoupling approach. The main idea is to approximate $R^{0}$ and $R^1$ through a fixed-point iterative scheme yielding \emph{smooth} approximations $R^{(i,k)}$, $i=0,1$, $k=1,2,\ldots$. We show that $R^{(i,k)}$ converge to $R^i$ uniformly on compacts, and provide a characterization of $R^{(i,k)}$ as a unique classical solution of the corresponding (de-coupled) HJB equations. The following Theorem summarizes our results, with the proof in \ref{app_HJB}.

\begin{theorem}\label{thm:classical-soln}
There exist functions $\hat{U}^{(i,k)}(t,X,S)$, $i=0,1$, $k=0,1,\ldots$ such that:
\begin{enumerate}
\item $\hat{U}^{(i,k)}(t,X,S) \in C^{1,2,2}([0,T] \times \R \times \R_+)$ are classical solutions of the HJB equations \eqref{eq: JPDE};

\item $\hat{U}^{(i,k)}(t,X,S)$ admit the control representation
\begin{align}\label{def:hatU_ik}
\hat{U}^{(i,k)}(t,X,S)=\sup_{(\pi_t) \in \mathcal{A}}\E^\PP_{t,X,S,i} \left[u(X^{(i,k)}_T+h(S^{(i,k)}_T)) \right], \quad i=0,1;
\end{align}

\item For any compact subsets $K_x \subset \mathbb{R}$ and $K_S \subset (0, \infty)$ we have
\begin{align}\label{eq:converge}
\lim_{k\rightarrow \infty}\sup_{(t,X,S) \in[0, T] \times K_x \times K_S}\vert  \hat{U}^{(i,k)}(t,X,S)-\hat{U}^i(t,X,S)\vert =0.
\end{align}
\end{enumerate}
\end{theorem}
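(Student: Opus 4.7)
My plan is to construct $\hat{U}^{(i,k)}$ by truncating the number of liquidity transitions that $(M_t)$ is allowed to undergo on $[t,T]$; this decouples the system \eqref{eq: Rhat} into a one-sided cascade of semi-linear parabolic equations amenable to classical PDE theory. Writing $\hat{U}^{(i,k)}(t,X,S) = -e^{-\gamma X}e^{-\gamma R^{(i,k)}(t,S)}$, I would initialize at $k=0$ (no transitions) by
\begin{align*}
R^{(0,0)}(t,S) = \Pbs(T-t,S) + \tfrac{d_0}{\gamma}(T-t), \qquad R^{(1,0)}(t,S) = h(S),
\end{align*}
obtained by formally setting $\nu_{01}=0$ and $\nu_{10}=0$, respectively, in the two equations of \eqref{eq: Rhat}. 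For $k \ge 1$ I would then define $R^{(0,k)}$ as the solution of
\begin{align*}
R^{(0,k)}_t + \tfrac{1}{2}\sigma_0^2 S^2 R^{(0,k)}_{SS} - \tfrac{\nu_{01}}{\gamma}e^{-\gamma(R^{(1,k-1)}-R^{(0,k)})} + \tfrac{d_0+\nu_{01}}{\gamma} = 0,
\end{align*}
and $R^{(1,k)}$ as the solution of the first-order ODE (in $t$, with $S$ a parameter)
\begin{align*}
R^{(1,k)}_t - \tfrac{\nu_{10}}{\gamma}e^{-\gamma(R^{(0,k-1)}-R^{(1,k)})} + \tfrac{\nu_{10}}{\gamma} = 0,
\end{align*}
both carrying terminal data $h(S)$ at $t=T$.

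For Step 1 (classical solvability), the ODE has globally Lipschitz right-hand side whenever the input $R^{(0,k-1)}$ is bounded, hence admits a unique smooth solution inheriting the $S$-regularity of $R^{(0,k-1)}$ by differentiation of its integral representation. The equation for $R^{(0,k)}$ is semi-linear parabolic with a zero-order nonlinearity that, subject to uniform $L^\infty$ control on the iterates, is globally Lipschitz in $R^{(0,k)}$ and smooth in $(t,S)$; standard semi-linear parabolic theory then yields a unique classical $C^{1,2}$ solution, and tensoring with $e^{-\gamma X}$ lifts this to $\hat{U}^{(i,k)} \in C^{1,2,2}$. The requisite uniform $L^\infty$ bound is produced by constructing explicit sub- and supersolutions from $\|h\|_\infty$ and the Merton functions $F_i$ of Definition \ref{de_Fs}. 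For Step 2 (control representation), I would run a verification argument up to the first regime jump $\tau$ of $(M_t)$ after time $t$: applying It\^o's formula to $\hat{U}^{(i,k)}(s,X_s,S_s)$ on $[t,\tau\wedge T]$, invoking the induction hypothesis on $\hat{U}^{(1-i,k-1)}$, and optimizing over $\pi$ in the state-$0$ drift and diffusion terms yields the one-step dynamic programming recursion
\begin{align*}
\hat{U}^{(i,k)}(t,X,S) = \sup_{(\pi_t)\in\mathcal{A}} \E^\PP_{t,X,S,i}\!\left[\mathbf{1}_{\{\tau>T\}}u(X_T+h(S_T)) + \mathbf{1}_{\{\tau\le T\}}\hat{U}^{(1-i,k-1)}(\tau,X_\tau,S_\tau)\right].
\end{align*}
Unwinding this recursion identifies $\hat{U}^{(i,k)}$ with \eqref{def:hatU_ik}, where $(X^{(i,k)}, S^{(i,k)})$ are the processes in which the regime is frozen after the $k$-th jump of $(M_t)$.

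For Step 3 (convergence), set $N_t \triangleq N_{01}(t)+N_{10}(t)$. On the event $\{N_T - N_t \le k\}$ the truncated and full dynamics coincide, so a coupling of optimal strategies together with the uniform $L^\infty$ bound on the $R$-iterates yields
\begin{align*}
|\hat{U}^{(i,k)}(t,X,S) - \hat{U}^i(t,X,S)| \le C(K_x,K_S)\, \PP_i(N_T - N_t > k),
\end{align*}
uniformly on $[0,T] \times K_x \times K_S$. Since $N_T - N_t$ is stochastically dominated by a Poisson random variable with finite mean $(\nu_{01}\vee\nu_{10})T$, the right-hand side vanishes as $k\to\infty$, establishing \eqref{eq:converge}. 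The main obstacle I anticipate is Step 1, specifically the derivation of uniform-in-$k$ $L^\infty$ and parabolic H\"older bounds on $R^{(0,k)}$ ensuring that the exponential nonlinearity remains Lipschitz with a $k$-independent modulus throughout the iteration; once these are in hand, Step 2 is a standard verification/dynamic-programming argument and Step 3 reduces to a Poisson-tail estimate.
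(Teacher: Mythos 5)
Your overall architecture matches the paper's: truncate the number of regime transitions to decouple \eqref{eq: Rhat} into a one-directional cascade, establish $L^\infty$ bounds via sub/supersolutions constructed from $\|h\|_\infty$ and the $F_i$, prove classical solvability of the decoupled equations, verify the control representation by It\^o + dynamic programming across jump times, and pass to the limit using a Poisson-tail estimate on the number of transitions. Three points where you and the paper genuinely diverge, and where your sketch has issues, are worth flagging. First, a concrete error: with your ansatz $\hat{U}^{(1,k)} = -e^{-\gamma X}e^{-\gamma R^{(1,k)}}$ the state-$1$ equation has right-hand side $\frac{\nu_{10}}{\gamma}e^{-\gamma R^{(0,k-1)}}e^{\gamma R^{(1,k)}} - \frac{\nu_{10}}{\gamma}$, which is \emph{not} globally Lipschitz in $R^{(1,k)}$ — it is only locally so, and you need the a priori bound before invoking ODE well-posedness. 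The paper instead uses the ansatz $\hat{U}^{(1,k)}=-e^{-\gamma X}R^{(1,k)}(t,S)$, under which the $R^{(1,k)}$-equation becomes a genuinely \emph{linear} first-order ODE (see \eqref{eq: RPDE_3}), sidestepping this entirely. Second, your assertion that "standard semi-linear parabolic theory" yields a classical $C^{1,2}$ solution of the $R^{(0,k)}$-equation once $L^\infty$ bounds are in hand is exactly the point the paper identifies as nonstandard: the nonlinearity $e^{\gamma R}$ is not globally H\"older in $R$, so classical existence results do not apply directly; the paper resolves this by a secondary monotone iteration (inner index $j$) over linearized PDEs with comparison-principle bounds, following Zhou \cite{zhou2006indifference}. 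You correctly flag this as "the main obstacle" but do not actually close it — the truncation-of-nonlinearity trick you implicitly appeal to needs to be spelled out. Third, for the convergence \eqref{eq:converge}, you propose a direct coupling plus Poisson-tail bound, whereas the paper simply invokes Theorem 3.2 of Diesinger et al.~\cite{diesinger2008asset}. Your route is more self-contained, but as stated it elides the integrability question: the exponential utility is unbounded below, so bounding $\E\bigl[\bigl|u(X_T^{\pi^{(k),*},k}+h(S_T^k))-u(X_T^{\pi^{(k),*}}+h(S_T))\bigr|\ind_{\{N_T>k\}}\bigr]$ by a constant times $\PP(N_T-N_t>k)$ requires uniform-in-$k$ exponential moment bounds on the controlled wealth (which in turn need uniform gradient, not just $L^\infty$, bounds on $R^{(0,k)}$), and a Cauchy--Schwarz step that only delivers $\sqrt{\PP(N_T-N_t>k)}$. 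The conclusion still follows, but the estimate is not as immediate as your sketch suggests.
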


\subsection{Indifference Prices}
Using $\hat{U}^i$ and $\hat{V}^i$,
the buyer's indifference prices $\Pb$ (initial state 0) and $\Qb$ (initial state 1) are defined via
\begin{align} \label{eq: bpricedef}
\hat U^0(t,X-\Pb,S) &=\hat V^0(t,X),\\
 \notag \hat U^1(t,X-\Qb,S) &=\hat V^1(t,X).
\end{align}
As is well known, under exponential utility the indifference price is wealth independent, $\Pb = \Pb(t,S)$, and $\Qb=\Qb(t,S)$. We know the value functions $\hat{V}^i=-e^{-\gamma X}F_i(t)$ from (\ref{eq: Vhat}) and $\hat{U}^i(t,X,S) =-e^{\gamma X}e^{-\gamma R^i(t,S)}$ from Theorem \ref{thm:classical-soln}, so combining these with the indifference price definition and simplifying equation (\ref{eq: Rhat}), we obtain the following PDEs for $\Pb$ and $\Qb$:

\begin{proposition} \label{prop_bprice}The option buyer's prices are the unique viscosity solutions of the semi-linear elliptic PDEs
\begin{align} \label{eq: bpricepde}
\left\{ \begin{aligned}
\Pb_t & + \frac{1}{2}\sigma_0^2 S^2\Pb_{SS}-\frac{\nu_{01}}{\gamma}\frac{F_1}{F_0}e^{-\gamma (\Qb-\Pb)}+\frac{(d_0+\nu_{01})}{\gamma}-\frac{1}{\gamma}\frac{F_0'}{F_0}=0,\\
\Qb_t & -\frac{\nu_{10}}{\gamma}\frac{F_0}{F_1}e^{-\gamma (\Pb-\Qb)}+\frac{\nu_{10}}{\gamma}-\frac{1}{\gamma}\frac{F_1'}{F_1}=0,
\end{aligned}\right.
\end{align}
with terminal conditions $\Pb(T,S)=\Qb(T,S) =h(S)$.
\end{proposition}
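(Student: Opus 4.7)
The plan is to exploit the explicit forms already in hand: $\hat V^i(t,X)=-e^{-\gamma X}F_i(t)$ from Corollary~4.3 and $\hat U^i(t,X,S)=-e^{-\gamma X}e^{-\gamma R^i(t,S)}$ from Theorem~4.4. Substituting these into the defining identities
\begin{align*}
\hat U^0(t,X-\Pb,S)=\hat V^0(t,X),\qquad \hat U^1(t,X-\Qb,S)=\hat V^1(t,X),
\end{align*}
one factor of $e^{-\gamma X}$ cancels on both sides, and taking logarithms immediately yields the algebraic links
\begin{align*}
\Pb(t,S)=R^0(t,S)+\tfrac{1}{\gamma}\log F_0(t),\qquad \Qb(t,S)=R^1(t,S)+\tfrac{1}{\gamma}\log F_1(t).
\end{align*}
Because $F_i$ depends only on $t$, these are smooth time-shifts: wealth independence of $\Pb,\Qb$ is immediate, and the terminal condition $\Pb(T,S)=\Qb(T,S)=h(S)$ follows from $F_i(T)=1$ and $R^i(T,S)=h(S)$.

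Next I would substitute these change-of-variable formulas into the coupled system \eqref{eq: Rhat}. The only nontrivial piece is the exponential nonlinearity: using
\begin{align*}
R^1-R^0=(\Qb-\Pb)-\tfrac{1}{\gamma}\log(F_1/F_0), \qquad R^0-R^1=(\Pb-\Qb)-\tfrac{1}{\gamma}\log(F_0/F_1),
\end{align*}
the factors $e^{-\gamma(R^1-R^0)}$ and $e^{-\gamma(R^0-R^1)}$ pick up the prefactors $F_1/F_0$ and $F_0/F_1$ respectively. The derivative $R^i_t$ contributes the extra term $-\tfrac{1}{\gamma}F_i'/F_i$, while $R^0_{SS}=\Pb_{SS}$ since the shift is $S$-independent. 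Collecting these ingredients recovers exactly the system \eqref{eq: bpricepde}.

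For the viscosity-solution claim, I would observe that the mapping $u\mapsto u+\tfrac{1}{\gamma}\log F_i(t)$ is a smooth, strictly monotone, $S$-independent transformation; standard stability theory for viscosity solutions then transfers the viscosity property of $R^i$ (established in Proposition~4.2/Theorem~4.4) to $\Pb,\Qb$ with respect to the transformed PDE, and uniqueness for \eqref{eq: bpricepde} follows from uniqueness of $R^i$ for \eqref{eq: Rhat}. I do not anticipate a serious technical obstacle here: the whole proof is essentially an algebraic change of dependent variable, and the only thing to be slightly careful about is verifying that the $F_i'/F_i$ and $\log(F_j/F_i)$ terms combine precisely with the constants $d_0,\nu_{01},\nu_{10}$ so that the right-hand side of \eqref{eq: bpricepde} matches. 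This is pure bookkeeping, aided by the ODEs $F'=(D-A)F$ satisfied by $F_0,F_1$.
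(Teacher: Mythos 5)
Your proposal is correct and follows essentially the same route the paper sketches in the paragraph immediately before Proposition~\ref{prop_bprice}: substitute $\hat V^i = -e^{-\gamma X}F_i(t)$ and $\hat U^i = -e^{-\gamma X}e^{-\gamma R^i(t,S)}$ into the indifference definitions to obtain $\Pb = R^0 + \tfrac{1}{\gamma}\log F_0$ and $\Qb = R^1 + \tfrac{1}{\gamma}\log F_1$, then push this $S$-independent, $t$-dependent shift through \eqref{eq: Rhat}, the $F_1/F_0$ and $F_0/F_1$ prefactors emerging from $e^{-\gamma(R^1-R^0)}$ and $e^{-\gamma(R^0-R^1)}$ exactly as you describe. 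One small clarification: the ODEs $F'=(D-A)F$ are not actually needed for the bookkeeping here, since the $F_i'/F_i$ terms are simply carried along unchanged into \eqref{eq: bpricepde}; and the uniqueness/viscosity assertion is indeed inherited directly from \eqref{eq: Rhat} under the smooth monotone change of dependent variable, as you note.
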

\noindent We note that all the terms involving $F_i(t)$  in \eqref{eq: bpricepde} are bounded on $[0,T]$.

\begin{remark} One can similarly consider the writer's indifference price $p^w(t,S)$ and $q^w(t,S)$, 
which then solve
\begin{align} \label{eq: wpricepde}
\left\{ \begin{aligned}
p^w_t & + \frac{1}{2}\sigma_0^2 S^2p^w_{SS}+\frac{\nu_{01}}{\gamma}\frac{F_1}{F_0}e^{\gamma (q^w-p^w)}-\frac{(d_0+\nu_{01})}{\gamma}+\frac{1}{\gamma}\frac{F_0'}{F_0}=0, \\
q^w_t &+\frac{\nu_{10}}{\gamma}\frac{F_0}{F_1}e^{\gamma (p^w-q^w)} - \frac{\nu_{10}}{\gamma}+\frac{1}{\gamma}\frac{F_1'}{F_1}=0,
\end{aligned}\right.
\end{align}
with terminal condition: $p^w(T,S) =q^w(T,S)=h(S)$.
\end{remark}

The utility indifference pricing procedure also provides an explicit identification of the hedging position, which is found naturally as part of the optimization problem, namely through a maximizer $\pi^*(t,S)$ of \eqref{defn:Uhat}. In an incomplete market perfect hedging is not possible, but this utility indifference hedging strategy provides a dynamic optimal adjustment of the portfolio strategy for an investor who holds the contingent claim.  For exponential utility, the optimal trading strategy $\pi^*(t,S)$ satisfies
\begin{align}\label{eq:pi-star}
\pi^*(t,S) X= \frac{\mu_0}{\sigma_0^2\gamma}-S \frac{\partial \Pb(t,S)}{\partial S},
\end{align}
in which the first term is exactly from the classical Merton problem, and the second term $\frac{\partial \Pb(t,S)}{\partial S}$ is the counterpart of the classical Delta to account for hedging motives of $h(S_T)$ by the risk-averse investor. We discuss \eqref{eq:pi-star} in Section \ref{sec:hedge}. 

\begin{remark}[Duality Representation]
Recall relative entropy defined in \eqref{eq:H}. One obtains (see  equation (4.6) in \cite{delbaen2002exponential}) the following indifference price expression:
\begin{align}\label{eq:dual-price}
\Pb(t,S;\gamma) &= \inf_{\Q \in \mathcal{E}}\left\{\E^\Q\left[h(S_T)\right]+\frac{1}{\gamma}\left (H(\Q\vert \PP)-H(\Qe \vert \PP)\right )\right \}.
\end{align}
The representation \eqref{eq:dual-price} shows that the indifference price is decreasing in $\gamma$, and yields the risk aversion asymptotics  (see Proposition (1.3.4) in \cite{becherer2001rational})  :
\begin{align}\label{eq:gamma-zero-limit}
\lim_{\gamma \rightarrow \infty}\Pb(t,S; \gamma)&=\inf_{\Q \in \mathcal{E}}\E^{\Q} \left[h(S_T)\right],\\
\lim_{\gamma \rightarrow 0}\Pb(t,S; \gamma)&= \E^{{\Qe}}\left[h(S_T) \right].
\end{align}
This implies  that when the risk aversion vanishes, the indifference price converges to the MEMM price. When the risk aversion goes to infinity (total risk aversion), we obtain the super-replication price. 
\end{remark}

\begin{remark}\label{rem:pricing-L}
Continuing Remark \ref{rem:adding-L} one could consider indifference pricing of $h(S_T)$ assuming more general dynamics in the illiquid regime. However as mentioned before, this would require adding $\pi_t$ as a state variable when $M_t =1 $ (i.e.~for the $\Qb$-value function in \eqref{eq: bpricepde}). We present details of the corresponding generalizations of \eqref{eq: bpricepde} in \ref{app:L}.
\end{remark}

\section{Asymptotic Analysis}\label{sec:asympt}
The indifference prices from \eqref{eq: bpricepde} do not  have a closed-form analytical solution. To gain further insight,
in this section we perform regular perturbations of \eqref{eq: bpricepde} in terms of a small risk aversion parameter $\gamma$ or a small probability of liquidity shock $\nu_{01}$.

\subsection{Small Risk Aversion Asymptotics}
For the asymptotic case $\gamma \rightarrow 0$, we can reduce the non-linear reaction-diffusion indifference price equation (\ref{eq: bpricepde}) to a linear one,  obtaining a closed-form representation for the limiting price.

We apply the formal asymptotic expansion
\begin{align*}
p(t,S)=p^{(0)}(t,S)+\gamma p^{(1)}(t,S) + \ldots \quad\text{and}\quad
\Qb(t,S)=\Qb^{(0)}(t,S)+\gamma \Qb^{(1)}(t,S) +\ldots.
\end{align*}

\begin{theorem}\label{p0p1}
Let $B_{ij}(s,t) \equiv \exp(- \int_s^t \hnu_{ij}(u) \,du)$ for $i=0,1$ and $j=1-i$.
The expansion terms $p^{(0)}$, $p^{(1)}$, $q^{(0)}$, and $q^{(1)}$ admit the stochastic representations
\begin{align} \label{eq: bp0}
&\left\{ \begin{aligned}
\Pb^{(0)}(t,S)&= B_{01}(t,T) P_{BS}(T-t,S) + \int^T_t \!\hnu_{01}(\tau) B_{01}(t,\tau) \tE_{t,S}\!\left[\Qb^{(0)}(\tau,S_\tau) \right] d\tau, \\
\Qb^{(0)}(t,S)&= B_{10}(t,T) h(S) + \int^T_t \hnu_{10}(\tau) B_{10}(t,\tau) \Pb^{(0)}(\tau,S) \,d\tau,
\end{aligned}\right. \\ \label{eq: bp1}
&\left\{ \begin{aligned}
\Pb^{(1)}(t,S)&=- \int^T_t \hnu_{01}(\tau) B_{01}(t,\tau) \tE_{t,S} \left[\frac{1}{2}\left(\Qb^{(0)}(\tau,S_\tau)-\Pb^{(0)}(\tau,S_\tau)\right)^2-\Qb^{(1)}(\tau,S_\tau) \right]d\tau, \\
\Qb^{(1)}(t,S)&=- \int^T_t \hnu_{10}(\tau) B_{10}(t,\tau)  \left(\frac{1}{2}\left\{\Pb^{(0)}(\tau,S)-\Qb^{(0)}(\tau,S)\right\}^2-\Pb^{(1)}(\tau,S) \right)d\tau,
\end{aligned}\right.
\end{align}
where $\hnu_{01}(t)$ and $\hnu_{10}(t)$ are specified in Proposition \ref{the_MEMM}, see \eqref{eq:Ae}.
\end{theorem}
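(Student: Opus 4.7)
The plan is to substitute the formal expansions $p(t,S) = p^{(0)}(t,S) + \gamma p^{(1)}(t,S) + O(\gamma^2)$ and $q(t,S) = q^{(0)}(t,S) + \gamma q^{(1)}(t,S) + O(\gamma^2)$ directly into the semi-linear system \eqref{eq: bpricepde}, Taylor-expand the nonlinearity as $e^{-\gamma(q-p)} = 1 - \gamma(q-p) + \tfrac{\gamma^2}{2}(q-p)^2 + O(\gamma^3)$, and match coefficients of powers of $\gamma$. In particular one gets $\tfrac{1}{\gamma}e^{-\gamma(q-p)} = \tfrac{1}{\gamma} - (q^{(0)}-p^{(0)}) - \gamma(q^{(1)}-p^{(1)}) + \tfrac{\gamma}{2}(q^{(0)}-p^{(0)})^2 + O(\gamma^2)$, and I would plug this into each PDE and organize by order.

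First I would verify that the singular $O(\gamma^{-1})$ contributions cancel identically. From $F' = (D-A)F$ one reads off $F_0' = (d_0+\nu_{01})F_0 - \nu_{01} F_1$ and $F_1' = \nu_{10}(F_1 - F_0)$, which is exactly the identity that makes the $\tfrac{1}{\gamma}$-coefficients in \eqref{eq: bpricepde} vanish in both lines. This is not a coincidence: it just expresses that the Merton factors $F_i$ were defined precisely to solve the homogeneous (zero-payoff) problem, so the constant terms in $\gamma^{-1}$ must balance.

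At order $\gamma^0$ the remaining equations form the coupled linear system
\begin{align*}
p^{(0)}_t + \tfrac{1}{2}\sigma_0^2 S^2 p^{(0)}_{SS} - \hnu_{01}(t)\bigl(p^{(0)} - q^{(0)}\bigr) &= 0, \\
q^{(0)}_t - \hnu_{10}(t)\bigl(q^{(0)} - p^{(0)}\bigr) &= 0,
\end{align*}
with $p^{(0)}(T,S) = q^{(0)}(T,S) = h(S)$, where $\hnu_{01}(t)=\nu_{01}F_1(t)/F_0(t)$ and $\hnu_{10}(t)=\nu_{10}F_0(t)/F_1(t)$ are exactly the MEMM intensities from Proposition \ref{the_MEMM}. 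Thus this is the Kolmogorov backward system for $(S_t,M_t)$ under $\Qe$, and applying Feynman--Kac --- or equivalently conditioning on the first regime switch $\tau_1$, whose $\Qe$-density is $\hnu_{01}(\tau)B_{01}(t,\tau)$ starting from state $0$ (resp.\ $\hnu_{10}(\tau)B_{10}(t,\tau)$ from state $1$) --- yields the representations in \eqref{eq: bp0}. In state $1$ the stock is frozen, so after one switch $p^{(0)}(\tau,S_\tau)$ is evaluated at the current $S$ without further expectation, explaining the absence of $\tE$ in the $q^{(0)}$-formula.

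At order $\gamma^1$ the same coefficient matching produces
\begin{align*}
p^{(1)}_t + \tfrac{1}{2}\sigma_0^2 S^2 p^{(1)}_{SS} - \hnu_{01}(t)\bigl(p^{(1)} - q^{(1)}\bigr) &= \tfrac{\hnu_{01}(t)}{2}\bigl(q^{(0)} - p^{(0)}\bigr)^2, \\
q^{(1)}_t - \hnu_{10}(t)\bigl(q^{(1)} - p^{(1)}\bigr) &= \tfrac{\hnu_{10}(t)}{2}\bigl(p^{(0)} - q^{(0)}\bigr)^2,
\end{align*}
with zero terminal data, and an identical Feynman--Kac step --- now carrying along the inhomogeneous source from the previous order --- delivers \eqref{eq: bp1} after moving the $\tfrac{1}{2}(q^{(0)}-p^{(0)})^2$ term to the right-hand side, which produces the overall minus sign visible in the statement. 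The main nuisance will be justifying that the formal expansion is actually legitimate, i.e.\ controlling the residual $p - p^{(0)} - \gamma p^{(1)}$ uniformly on compacts; this would need a comparison argument for the semilinear system together with the boundedness of $F_i'/F_i$, but since the theorem as stated only asserts the integral characterization of the expansion coefficients, the substantive work is the coefficient matching and the Feynman--Kac representation described above.
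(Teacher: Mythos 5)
Your proposal follows the same route as the paper: substitute the formal expansions into \eqref{eq: bpricepde}, Taylor-expand the exponential, match powers of $\gamma$ (the paper records exactly your order-$\gamma^0$ and order-$\gamma^1$ systems as \eqref{eq: expansion_price_pde}), then invoke Feynman--Kac for the $p$-equations and an integrating factor for the $q$-ODEs. Your explicit check that the $O(\gamma^{-1})$ terms cancel via $F_0'=(d_0+\nu_{01})F_0-\nu_{01}F_1$ and $F_1'=\nu_{10}(F_1-F_0)$ is a detail the paper leaves implicit, but the substance of the argument is identical.
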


\begin{proof}
Formally matching terms in powers of $\gamma$ in the PDE \eqref{eq: bpricepde} yields
the following equations:
\begin{subequations}\label{eq: expansion_price_pde}
\begin{align}%
&p^{(0)}_t+\frac{1}{2}\sigma_0^2 S^2p^{(0)}_{SS}+\hnu_{01}(t) (q^{(0)}-p^{(0)})=0, \\
&q^{(0)}_t+\hnu_{10}(t) (p^{(0)}-q^{(0)})=0, \\
&p^{(1)}_t+\frac{1}{2}\sigma_0^2 S^2p^{(1)}_{SS}+\hnu_{01}(t) (q^{(1)}-p^{(1)})-\frac{1}{2}\hnu_{01}(t) (q^{(0)}-p^{(0)})^2=0, \\
&q^{(1)}_t+\hnu_{10}(t) (p^{(1)}-q^{(1)})-\frac{1}{2}\hnu_{10}(t) (q^{(0)}-p^{(0)})^2=0,
\end{align}
\end{subequations}
with boundary conditions:
\begin{align*}
p^{(0)}(T,S)&=q^{(0)}(T,S) = h(S), \qquad\text{and}\quad
p^{(1)}(T,S)=q^{(1)}(T,S)=0.
\end{align*}

The equations for $p^{(0)}$ and $p^{(1)}$ are typical Cauchy problems described in \cite[Ch. 5.7]{karatzas1991brownian}. Since we have constant diffusion and volatility terms, and the source and the terminal condition functions are bounded, there exists a unique solution to \eqref{eq: expansion_price_pde} by a direct application of the Feynman-Kac representation in Theorem 7.6 in \cite{karatzas1991brownian}, leading to \eqref{eq: bp0}-\eqref{eq: bp1}. Finally, the equations for $q^{(0)}$ and $q^{(1)}$ are linear first-order ODEs that have an explicit solution using an integrating factor.
\end{proof}

Using the dual representation \eqref{eq:gamma-zero-limit}, it is immediate that $\Pb^{(0)} \equiv p^E$ and $\Qb^{(0)}$ are the MEMM prices, see Corollary~\ref{MEMMprice} and \cite{becherer2001rational}. 

\subsection{Small Probability of Liquidity Shocks}\label{sec: 3state}
Under normal conditions, the likelihood of a liquidity shock is small (typically caused by a natural or geopolitical catastrophe that is rare by definition). In terms of our model, the transition probability $\nu_{01}$ is small under this scenario and we may view the resulting option price as a \emph{perturbation} of the classical Black-Scholes price  corresponding to $\nu_{01} = 0$. In particular, for a given option maturity $T$, the probability of more than one liquidity shock is exceedingly small, $\OO(\nu_{01}^2)$.

In light of these considerations, we analyze the asymptotic case of at most a single liquidity shock. Such a restricted model may be obtained by redefining the state-space of $(M_t)$ as $\check{E}=\{0,1,2\}$. As before, we assume the market starts with the regular liquid state (0). Due to the unpredictable illiquidity shock, the market is susceptible to transiting to the illiquid state (1), and before the terminal time $T$ it may come back to the absorbing liquid state (2) which is now immunized from any further illiquidity shocks. The corresponding infinitesimal generator is
\begin{align} \label{eq:A3}
\check{A} = \begin{pmatrix}
-\nu_{01} & \nu_{ 01} & 0 \\
0 & -\nu_{10}  & \nu_{10} \\
0 & 0 & 0 \end{pmatrix}.
\end{align}

Similarly, we define the regime-dependent discounting factors to the utility functions in Merton's problem, denoted by $\check{F}_0$, $\check{F}_1$ and $\check{F}_2$.
\begin{subequations}\label{check-F}\begin{align}
\check{F}_2(t) &= F_2(t)=e^{-d_0(T-t)}, \\
\check{F}_1(t) &=\check{F}_2(t)\left ( \frac{d_0}{d_0-\nu_{10}}e^{(d_0-\nu_{10})(T-t)}-\frac{\nu_{10}}{d_0-\nu_{10}}\right ), \\
\check{F}_0(t) &= \check{F}_1(t)+\frac{d_0}{d_0+\nu_{01}-\nu_{10}}\left (e^{-(d_0+\nu_{01})(T-t)}-e^{-\nu_{10}(T-t)}\right ),
\end{align}
\end{subequations}
which are  the solutions of the ODE system
\begin{align*}
\check{F}'(t)=(\check{D}-\check{A})\check{F}(t), \quad\check{F}(T)= \begin{pmatrix}
1 \\ 1 \\ 1 \end{pmatrix}, \qquad\text{with}\quad \check{D}=\begin{pmatrix}
\frac{\mu_0^2}{2\sigma_0^2} & 0 & 0 \\
0  & 0 & 0 \\
0 & 0 & \frac{\mu_0^2}{2\sigma_0^2}\end{pmatrix}.
\end{align*}

Note that $\check{F}_2(t)=F_2(t)$, the discount factor in the classical Merton's problem. This is not a surprise considering the state (2) in this three-state liquidity switching case is in fact the classical complete market. In addition, we have $\check{F}_0(t)<\check{F}_1(t)$, and $\check{F}_2(t)<\check{F}_1(t)$. We obtain that the MEMM in the single-shock market above is given by  $\check{Q}^E={\Q}^{\check{\alpha}}$, where the minimal entropy risk factor is $\check{\alpha}(i,j)=\frac{\check{F}_j(t)}{\check{F}_i(t)}$,  $i,j \in \{0,1,2\}$. Thus, under $\check{Q}^E$ the Markov chain $(M_t)$
has transition intensities
\begin{align} 
\cnu_{01}(t)&=\nu_{01}\frac{\check{F}_1(t)}{\check{F}_0(t)}, \qquad\text{and}\qquad \cnu_{10}(t)=\nu_{10}\frac{\check{F}_2(t)}{\check{F}_1(t)}.  \label{eq: newnu10}
\end{align}

Let us denote $\cB_{ij}(s,t) \triangleq e^{-\int_s^t \cnu_{ij}(u) \,du}$, $i=0,1$ and $j=1-i$.

\begin{corollary}Using the pricing formula in Proposition \ref{prop_EMMprice}, the MEMM price $\check{p}^E$ of a contingent claim with payoff $h(S_T)$ in the single-shock market is
\begin{align}
\check{p}^E(t,S) &=\tE_{t,S} \Bigl[ \cB_{01}(t,T) h(S_T)+ \int_t^T \! \cnu_{01}(\tau)  \cB_{01}(t,\tau) \check{L}(\tau) \,d\tau \Bigr]\label{eq: MEMMprice} \qquad
\text{where } \\
\check{L}(\tau) &= \int_\tau^T{\cnu}_{10}(s)  \cB_{10}(\tau,s) h(S_{T-s+\tau}) \, ds + \cB_{10}(\tau,T) h(S_\tau).
\end{align}
\end{corollary}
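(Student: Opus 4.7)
The plan is to apply the conditioning argument behind Proposition \ref{prop_EMMprice} recursively in the three-state model $\check{E} = \{0,1,2\}$, propagating backwards from the absorbing state $2$ to the initial state $0$. Since each state has a single outgoing transition ($0 \to 1$ or $1 \to 2$), the iterated-expectation formula of Proposition \ref{prop_EMMprice} applies verbatim at each step with the $\check{\cdot}$ quantities replacing their two-state analogues. First, in the absorbing liquid state $2$ no further transitions occur and the stock follows the $\tPP$-GBM, so the MEMM price at $(s, S)$ is simply the Black-Scholes price $\Pbs(T-s, S) = \tE_{s,S}[h(S_T)]$. Second, in the illiquid state $1$ the stock is frozen, and the only possible transition is $1 \to 2$ with MEMM intensity $\cnu_{10}(\cdot)$; conditioning on this transition time and plugging in Step 1's output yields the intermediate price
\begin{align*}
\check{p}^E_1(\tau, S) \triangleq \cB_{10}(\tau, T)\, h(S) + \int_\tau^T \cnu_{10}(s)\, \cB_{10}(\tau, s)\, \Pbs(T-s, S)\, ds,
\end{align*}
whose two terms correspond to ``no recovery before $T$'' (payoff $h(S)$ at the frozen level) and ``recovery at some $s \in [\tau,T]$ followed by GBM evolution over the remaining $T-s$ units of liquid time''.

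The third and central step is to apply the same conditioning to state $0$ with $\check{p}^E_1(\tau, S_\tau)$ as the post-transition price, producing
\begin{align*}
\check{p}^E(t,S) = \cB_{01}(t,T)\, \tE_{t,S}[h(S_T)] + \int_t^T \cnu_{01}(\tau)\, \cB_{01}(t,\tau)\, \tE_{t,S}\bigl[\check{p}^E_1(\tau, S_\tau)\bigr]\, d\tau,
\end{align*}
and then collapsing this into the single outer $\tE_{t,S}$ in \eqref{eq: MEMMprice}. The key identification is that, by the Markov property and time-homogeneity of GBM under $\tPP$, one has $\Pbs(T-s, S_\tau) = \tE_{t,S}\bigl[ h(S_{T-s+\tau}) \bigm| \mathcal{G}_\tau \bigr]$, since the accumulated business time from $t$ to terminal maturity along the sample path ``freeze at $\tau$, recover at $s$'' equals $(\tau - t) + (T-s) = (T-s+\tau) - t$. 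Combining this identity with Fubini (legitimate by the boundedness of $\cnu_{ij}$ on $[0,T]$ together with the usual integrability assumption on $h$) allows pulling the outer $\tE_{t,S}$ through both the $d\tau$ and $ds$ integrations, producing exactly \eqref{eq: MEMMprice}.

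The main obstacle I expect is precisely this last bookkeeping step: cleanly identifying $\Pbs(T-s, S_\tau)$ with a conditional expectation of $h(S_{T-s+\tau})$ under the unfrozen $\tPP$-GBM $(S_u)$, and verifying that the exchange of expectation with the $d\tau$ and $ds$ integrals is justified. Everything else is a mechanical iteration of Proposition \ref{prop_EMMprice} with $\check{F}_i$ in place of $F_i$ and the three-state generator $\check{A}$ in place of $A$.
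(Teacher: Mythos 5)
Your proof is correct and follows the same route the paper implicitly intends: iterating the conditioning formula of Proposition \ref{prop_EMMprice} backwards through the three states $2 \to 1 \to 0$ under $\check{\Q}^E$, with the MEMM intensities $\cnu_{01},\cnu_{10}$ from \eqref{eq: newnu10}. The only non-mechanical step you identify — replacing $\Pbs(T-s,S_\tau)$ by the conditional expectation $\tE_{t,S}[h(S_{T-s+\tau})\mid\mathcal{G}_\tau]$ via the Markov property and time-homogeneity of the $\tPP$-GBM, then collapsing the nested expectations under a single $\tE_{t,S}$ by Fubini — is exactly what reconciles the iterated form with the displayed formula, and you justify it adequately.
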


Next, the corresponding indifference buyer's price assuming initial regime $M_t=0$ is the unique classical solution of the semi-linear PDE
\begin{align}\label{eq:check-p}
\check{p}_t + \frac{1}{2}\sigma_0^2 S^2\check{p}_{SS}-\frac{\nu_{01}}{\gamma}\frac{\check{g}(t,S)}{\check{F}_0}e^{\gamma \check{p}}+\frac{(d_0+\nu_{01})}{\gamma}-\frac{1}{\gamma}\frac{(\check{F}_0)_t}{\check{F}_0}=0,
\end{align}
with terminal condition $\check{p}(T,S)=h(S)$ and the explicit source term
\begin{align*}
\check{g}(t,S) =  \int_0^{T-t}\nu_{10}e^{-\nu_{10} u}e^{-d_0(T-t-u)}e^{-\gamma P_{BS}(T-t-u,S)} du+e^{-\nu_{10}(T-t)}e^{-\gamma h(S)}.
\end{align*}

As in last subsection, we can further expand in $\gamma$ to obtain
$\check{p}(t,S)=\check{p}^{(0)}(t,S)+\gamma \check{p}^{(1)}(t,S) + \ldots$, with
\begin{align} \label{eq: three_bp0p1}
\check{p}^{(0)}&(t,S)= \cB_{01}(t,T)  P_{BS}(T-t,S) +
\int^T_t \cnu_{01}(\tau) \cB_{01}(t,\tau)  \nonumber \\ & \times \left \{\int^T_\tau \cnu_{10}(\zeta) \cB_{10}(\tau,\zeta) P_{BS}(T-\zeta,S)  d\zeta + \cB_{10}(\tau,T) P_{BS}(\tau-t,S) \right \}d\tau\\
\check{p}^{(1)}&(t,S)=-\frac{1}{2}\mathbb \tE_{t,S} \Big[ \int^T_t \cnu_{01}(\tau) \cB_{01}(t,\tau) \times \Bigl\{\int^T_\tau \cnu_{10}(\zeta) \cB_{10}(t,\zeta) \bigl(\check{p}^{(0)}(\tau,S_\tau)\nonumber \\ & \quad -\tE_{\zeta,S_\zeta}[h(S_T)] \bigr)^2\, d\zeta+ \cB_{10}(\tau,T) (\check{p}^{(0)}(\tau, S_\tau)-h(S_\tau))^2\Bigr\}d\tau \Big].
\end{align}

Compared to the original model we not only have a single semi-linear PDE for the indifference price $\check{p}$ but also explicit expansion terms
 $\check{p}^{(0)}$ and  $\check{p}^{(1)}$. The zero-order term $\check{p}^{(0)}$ can be expressed as the weighted average of  the Black-Scholes prices of options with same initial condition but varying maturities,
 \begin{align}\label{eq: check-w}
 \check{p}^{(0)}(t,S) = \int_t^T P_{BS}(T-u,S) \check{w}(du)
 \end{align}
 for a probability measure $\check{w}(\cdot)$ on $[t,T]$ defined implicitly in \eqref{eq: three_bp0p1} via the transition intensities $\cnu_{ij}(t)$. Similarly,
$\check{p}^{(1)}$ is clearly negative and is a weighted average of squared differences $\check{p}^{(0)}(u,S)-P_{BS}(T-u,S)$ with \emph{same} weights $\check{w}(du)$ on $[t,T]$ as in \eqref{eq: check-w}.

\section{Implied Time-to-Maturity}\label{sec:implied-ttm}

In our liquidity switching model, the random trading interruption is the only extra risk in comparison to the Black-Scholes model. As the market is static during the no-trading time periods, one could naturally regard the model as a stochastically shortened time-to-maturity Black-Scholes model. In this subsection, we accordingly define the concept of implied time-to-maturity as the measure of the investor's view of effective time horizon based on her pricing approach. Let us first define the \emph{realized} time-to-maturity $\T$,
\begin{align}
\T \triangleq \int_0^T \ind_{\{M_s = 0\}} ds.
\end{align}
Then $\T$ is a bounded $\mathcal{G}_T$-measurable random variable. Due to the independence of $(W_t)$ and $(M_t)$ and using the strong Markov property of Brownian motion, we can view the stock as following \eqref{eq:gbm}, and the contingent claim as paying $h(S_\T)$
at the random terminal date $\T$.

\begin{definition} The \textbf{Adjusted TTM} $\tilde{T}$ under a measure $\Q$ is defined as
\begin{align}
\tilde{T}^i(T; \Q) \triangleq \E^\Q[ \T \vert M_0=i], \qquad i=0,1,
\end{align}
i.e., the original maturity $T$ subtracted by the expected no-trading time duration.
\end{definition}

Assuming a time-stationary generator $A$ as in \eqref{eq: A}, standard analysis of the sojourn times of the 2-state Markov chain $(M_t)$ yields
\begin{corollary} \label{cor_adjTTM}The adjusted time-to-maturity under the minimal EMM $\Qm$ is given explicitly by
\begin{align}\label{eq:tildeT} \left\{ \begin{aligned}
\tilde{T}^0(T; \Qm) &=\frac{1}{(\nu_{01}+\nu_{10})^2} \left[ \nu_{01} +\nu_{10}(\nu_{01}+\nu_{10})T-\nu_{01} e^{-(\nu_{01}+\nu_{10})T} \right],\\
\tilde{T}^1(T; \Qm) &=\frac{1}{(\nu_{01}+\nu_{10})^2} \left[ -\nu_{10}+\nu_{10}(\nu_{01}+\nu_{10})T+\nu_{10} e^{-(\nu_{01}+\nu_{10})T} \right].
\end{aligned}
 \right. \end{align}
\end{corollary}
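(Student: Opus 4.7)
The plan is to reduce the computation of $\tilde{T}^i(T;\Qm)$ to a transition-probability calculation for the two-state Markov chain $(M_t)$ and then integrate explicitly. Recall that under the minimal martingale measure $\Qm$ the generator of $(M_t)$ remains $A$ as in \eqref{eq: A}, so no change of measure on the chain enters the analysis.

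First I would apply Tonelli's theorem to the nonnegative integrand, obtaining
\begin{align*}
\tilde{T}^i(T;\Qm) = \E^{\Qm}\!\left[\int_0^T \ind_{\{M_s=0\}}\,ds\,\Big|\,M_0=i\right] = \int_0^T \PP\!\left(M_s=0\,\middle|\,M_0=i\right) ds,
\end{align*}
which is justified since $\ind_{\{M_s=0\}}\le 1$ and the interval $[0,T]$ is bounded.

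Next I would compute the transition probabilities $p_{i0}(s):=\PP(M_s=0\mid M_0=i)$ from the forward equation $p'(s)=p(s)A$. Diagonalizing the $2\times 2$ generator (its eigenvalues are $0$ and $-\lambda$ with $\lambda:=\nu_{01}+\nu_{10}$) gives the standard two-state formulas
\begin{align*}
p_{00}(s) &= \frac{\nu_{10}}{\lambda} + \frac{\nu_{01}}{\lambda}\,e^{-\lambda s}, \qquad
p_{10}(s) = \frac{\nu_{10}}{\lambda} - \frac{\nu_{10}}{\lambda}\,e^{-\lambda s}.
\end{align*}
Both can be verified by checking the ODE and the initial conditions $p_{00}(0)=1$, $p_{10}(0)=0$, together with convergence to the stationary distribution $\bigl(\tfrac{\nu_{10}}{\lambda},\tfrac{\nu_{01}}{\lambda}\bigr)$ as $s\to\infty$.

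Finally I would integrate these expressions over $[0,T]$. Elementary antidifferentiation yields
\begin{align*}
\int_0^T p_{00}(s)\,ds &= \frac{\nu_{10}}{\lambda}T + \frac{\nu_{01}}{\lambda^2}\bigl(1-e^{-\lambda T}\bigr), \\
\int_0^T p_{10}(s)\,ds &= \frac{\nu_{10}}{\lambda}T - \frac{\nu_{10}}{\lambda^2}\bigl(1-e^{-\lambda T}\bigr),
\end{align*}
and clearing the common factor $\lambda^2=(\nu_{01}+\nu_{10})^2$ in the denominator recovers \eqref{eq:tildeT}. There is no real obstacle here; the only step requiring any care is writing the transition probabilities correctly, which is a textbook computation for a two-state chain. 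The whole argument is essentially a one-line consequence of Fubini plus a known closed form, which is why the result is stated as a corollary.
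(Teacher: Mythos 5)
Your proof is correct and is precisely the ``standard analysis of the sojourn times'' that the paper invokes without writing out: swap the expectation and the time integral via Tonelli, use that the $\Qm$-generator of $(M_t)$ equals $A$ so the transition probabilities $p_{i0}(s)$ are the familiar two-state formulas, and integrate. All the arithmetic checks out, so there is nothing to add.
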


Simple algebra arguments lead to $0<\tilde{T}^1(T)<\tilde{T}^0(T)<T$.
The adjusted TTM simply subtracts the expected period of market freeze, without regard for \emph{when} it will occur. Thus, it does not account for option-specific features whereby the \emph{timing of the shock} is also important, as the option time-decay depends on $t$. This motivates the following definition (compare to the popular concept of implied volatility).

\begin{definition}
Given model-based price $p=p(t,S)$, the \textbf{implied time-to-maturity} $\hat{T}(t,S; p)$ is defined implicitly via
\begin{align}\label{def: hatT}
\Pbs(\hat{T}(t,S),S) \triangleq p(t,S).
\end{align}
\end{definition}
In other words, $\hat{T}$  is defined as the time-to-maturity that equalizes $p$ with the classical Black-Scholes price.

\begin{remark} Certainly this implied TTM concept is meaningful only when \eqref{def: hatT} has a unique solution, i.e.~when the time value of an option is meaningful, such as for Calls and Puts. In such cases, we could use $\hat{T}$ to parameterize model option prices. However, for options, such as digital Calls, whose Black-Scholes price is not monotone in $T$, there is no unique implied TTM and thus this concept should not be used.
\end{remark}

The implied TTM can be used to explain the spread between the Black-Scholes price and the liquidity-adjusted valuation. Recall that the option value may be decomposed into intrinsic value (payment an agent would gain if the option is  exercised immediately) and time value (the value of  delaying exercise until maturity). Typically, the time value is increasing in time-to-maturity. Consequently, since a potential liquidity shock shrinks the expected TTM, the option price decreases (note that the intrinsic value is unaffected due to our assumption that $(S_t)$ is static during the shock) due to the extra time-decay.

\section{Numerical Examples}\label{sec: numeric}
In this section we present a series of numerical experiments to illustrate our derivations in the liquidity switching model. Recall that we have obtained three main types of prices, the indifference price $\Pb$ from \eqref{eq: bpricepde}, the MEMM price $p^E$ in Theorem \ref{MEMMprice} (equivalent to $p^{(0)}$ in Theorem \ref{p0p1}), and the minimal martingale measure price $p^{MM}$. The indifference price is obtained through utility maximization arguments, while the other two are from special martingale measures. Both $p^E$ and $p^{MM}$ are easily computed from the generic formula \eqref{eq: allEMM} using the explicit descriptions of the corresponding transition intensities of $(M_t)$. We compute $\Pb$ by solving the corresponding semi-linear PDE using a mixed implicit-explicit finite difference scheme, see \ref{app_scheme} for details.

Table \ref{tbl: params} summarizes the parameter values used. Note that with $\nu_{01}=1$ and $\nu_{10}=12$ we assume that liquidity shocks occur at a rate of once per year and last an average of one month. As two representative types of European options we  consider vanilla Calls with payoff $h(S_T) = (S_T-K)_+$ and digital Calls with payoff $h(S_T) = \ind_{\{ S_T > K \}}$.

\begin{table} [h]
\caption{\label{tbl: params} Parameter values used}
\begin{tabular}{@{}llr@{}} \hline
Parameter & Meaning & Value \\ \hline
$\mu_0$ & \text{Growth rate of the stock} & 0.06 \\
$\sigma_0$ & \text{Volatility of the stock} & 0.3 \\
$\nu_{01}$ & \text{Transition rate from state} $0 \rightarrow 1$ & 1 \\
$\nu_{10}$ & \text{Transition rate from state} $1 \rightarrow 0$ & 12 \\
$K$ & \text{Strike Price} & 10 \\
$T$ & \text{Option's maturity in years} & 1 \\
$\gamma$ & \text{Risk aversion parameter} &  1\\
 \hline
\end{tabular}
\end{table}

\subsection{Price Comparison}
We first discuss the observed relations between the different model prices $p$, $p^E$, and $p^{MM}$.
As the MMM price $p^{MM}$ assigns zero illiquidity risk premium, we have $p^E<p^{MM}$, see Proposition~\ref{the_MEMM}. Also due to risk aversion, $\Pb < p^E$, so that we have
$$p < p^E < p^{MM}.$$
Numerically, we find that at the given frequency of liquidity shocks, the difference between the MMM and MEMM prices, $p^{MM}$ vs.~$p^E$, is negligible across all option types we tried, showing that the entropic liquidity risk premium is generally very small.
 Next, our numerical results confirm that $p^E >  p$. However, the magnitude of the price spread varies for different types of options. For Call options, there is no significant difference (less than $0.5\%$ in all cases we tried);  while significant spread is found for digital options, see Figure \ref{fig: MEMM_indiff_vs_s0_for_t}. In absolute terms, the spreads are largest at-the-money (right panel of Figure  \ref{fig: MEMM_indiff_vs_s0_for_t}), however since the Call price is increasing with respect to $S$, the percent difference is generally decreasing in $S$ (left panel of Figure  \ref{fig: MEMM_indiff_vs_s0_for_t}).

Regarding the price spreads versus time-to-maturity (TTM), intuitively, as TTM shrinks, there is less chance to have liquidity shocks, thus smaller spreads should be expected. This pattern is observed for Call options. However, for digital Calls, the effect of TTM on $p^E-p$ is ambiguous. This is due to the digital Call's large Theta ($\frac{\partial p}{\partial T}$) around the strike. Hence, for short-maturity ATM digital Calls a large liquidity cost is observed, see the hump at $t=6/12$ in the right panel of Figure \ref{fig: MEMM_indiff_vs_s0_for_t}. Moreover, out-of-the-money, the digital Call B-S price is not monotone in time-to-maturity. Therefore, the shorter TTM ($t=6/12$), the smaller $p^E$, and thus the highest percent spread happens at the deep OTM case, see left panel of Figure \ref{fig: MEMM_indiff_vs_s0_for_t}.

\begin{remark}\label{rem:worst-case}
In this paper we have chosen a probabilistic description of liquidity shocks through the Markov chain $(M_t)$. An alternative formulation could focus on worst-case analysis, treating the shock interval $[\tau_{01},\tau_{10}]$ as unknown constants ,see \cite{Menkens2006crash}. In such min-max setting, the liquidity spread could be large even as $t \to T$ if the option time-decay is strong.
\end{remark}

 \begin{figure} [ht]
\centering \hspace*{-10pt}
\begin{tabular}{p{2.5in}p{2.5in}}
\includegraphics[width=2.6in,height=2in]{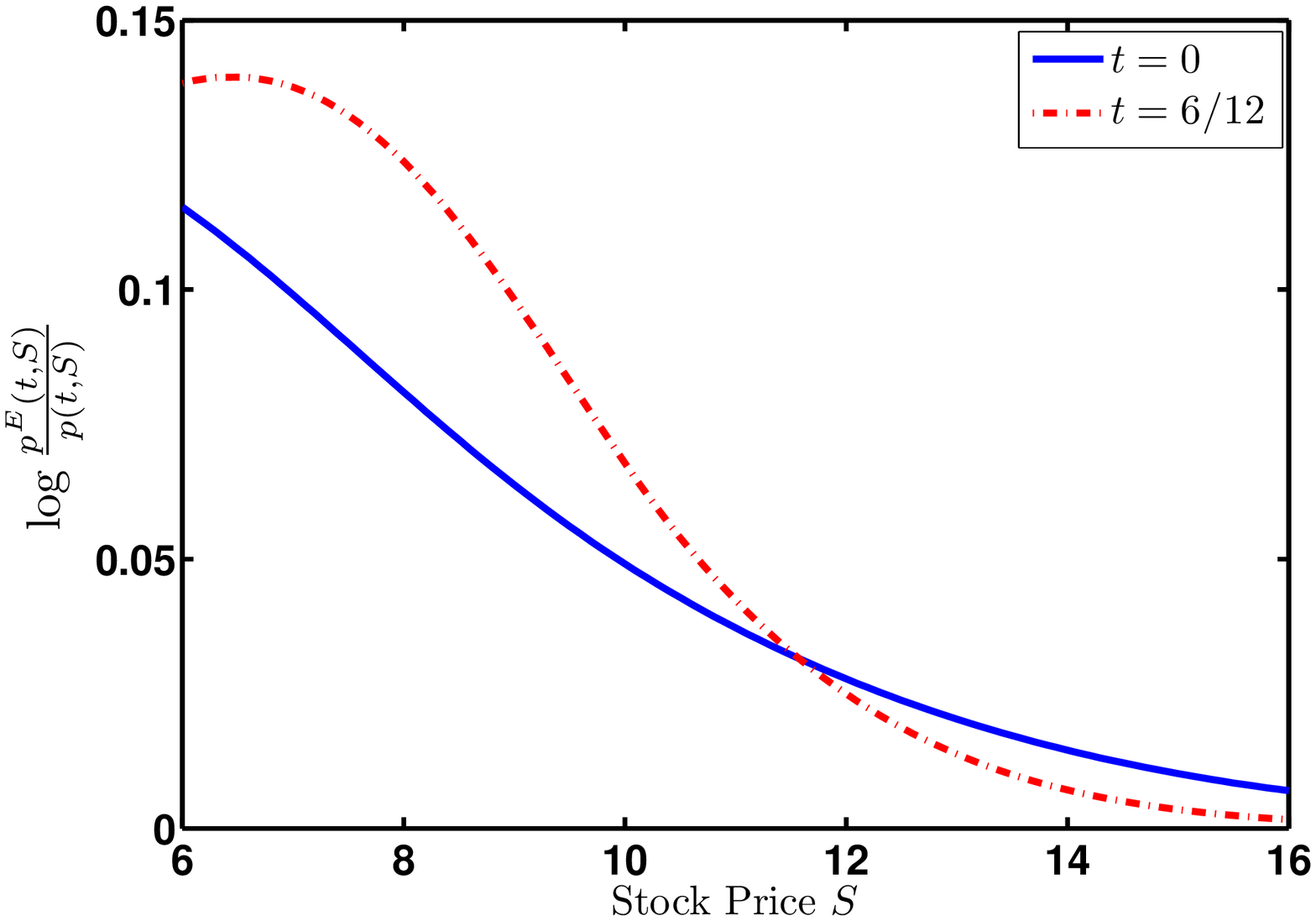}
&
\includegraphics[width=2.6in,height=2in]{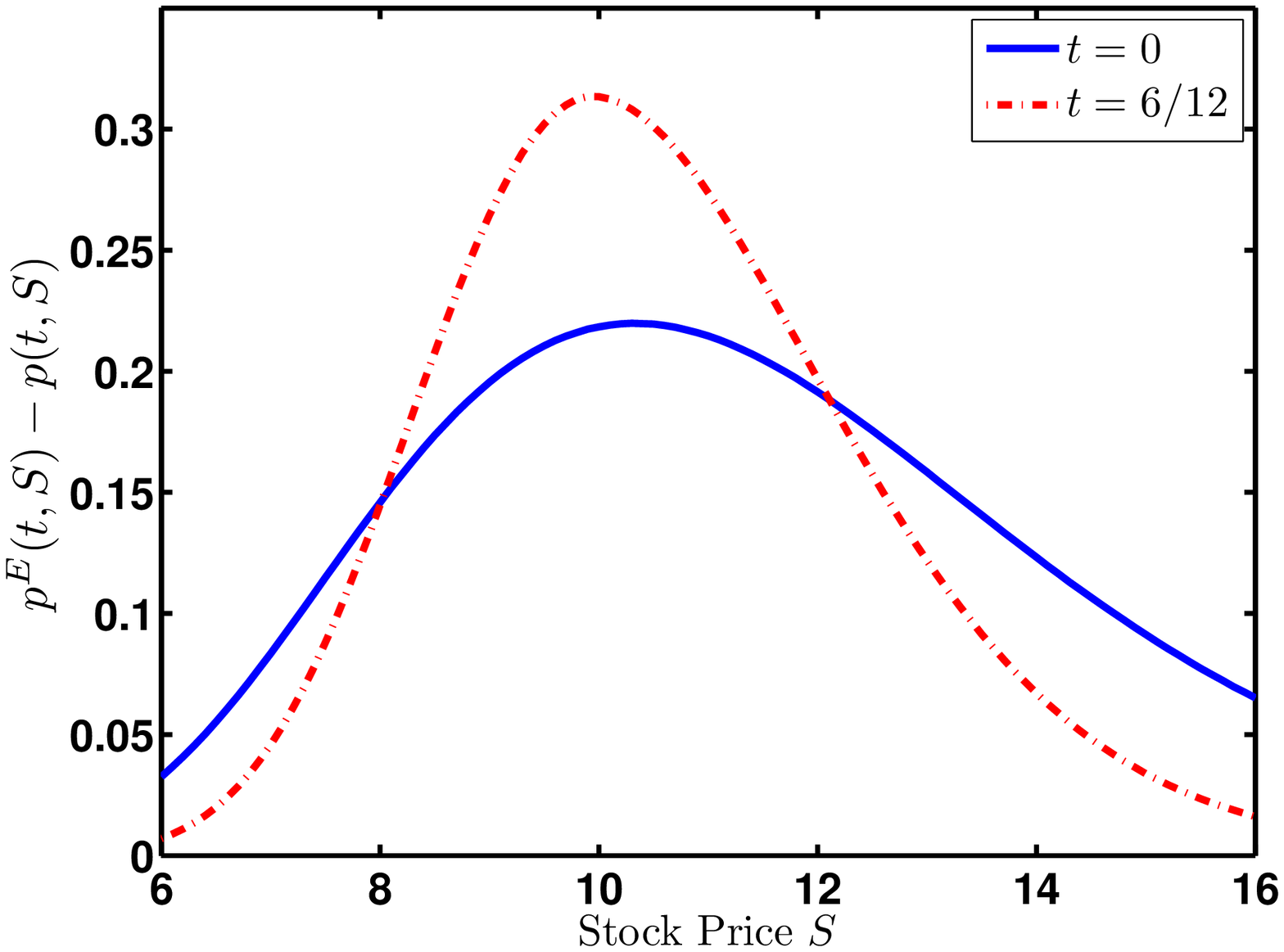}
\\
\end{tabular}
\centering
\caption{Price difference between MEMM $p^E(t,S)$ and indifference price $p(t,S)$ in percent (left panel) and in dollars (right panel) for buying 10 digital Calls. Parameters are from Table \ref{tbl: params}.}
 \label{fig: MEMM_indiff_vs_s0_for_t}
\end{figure}

Next, we study the quality of the asymptotic expansion in $\nu_{01}$ in Section \ref{sec: 3state} by comparing the indifference option price $p(t,S)$ and its approximation $\check{p}(t,S)$ from \eqref{eq:check-p}. Recall that $\check{p}(t,S)$ corresponds to the price assuming at most one liquidity shock on $[0,T]$ and is therefore the first-order term in the asymptotics using $\nu_{01}$. We find that this first-order approximation explains over 50\% of the spread across most stock prices (see Figure \ref{fig: indiff3_indiff2_vs_s0_for_t}) and can therefore provide a useful rule-of-thumb without the need to solve any nonlinear PDEs.

\begin{figure} [h]
\centering
\begin{tabular}{cc}
\includegraphics[width=4in]{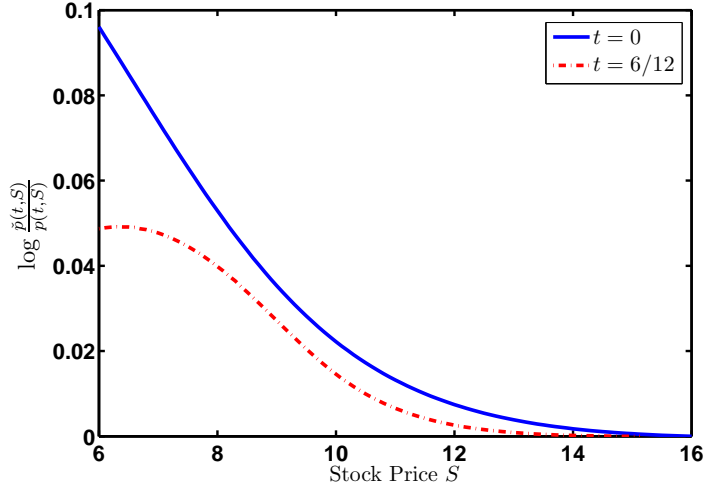}
\, \\
\end{tabular}
\centering
\caption{Small $\nu_{01}$ asymptotics. We plot the percent price difference between $\check{p}(t,S)$ defined in \eqref{eq:check-p} and $p(t,S)$ for buying 10 digital Calls. All the option and model parameters are from  Table \ref{tbl: params}.}
 \label{fig: indiff3_indiff2_vs_s0_for_t}
\end{figure}

\begin{table} [ht]
\caption{\label{tbl: price} Prices per option contract. Positive number of contracts $n$ refers to buyer's prices and negative $n$ to writer's prices. The parameters used are from Table \ref{tbl: params}. }
{\begin{tabular}{@{}c|ccc|ccc@{}}\hline 
\multicolumn{7}{c}{Indifference Price per Contract}\\
\hline \hline
\multicolumn{1}{c|}{\# of Contracts $n$} &  \multicolumn{3}{|c|}{\bf Call Option} & \multicolumn{3}{|c}{\bf Digital Call Option}\\
\multicolumn{1}{c|}{in Possession} & $S_0=8$ & {$S_0=10$} & \multicolumn{1}{c|}{$S_0=12 $}  & $S_0=8$ & $S_0=10$ &  \multicolumn{1}{c}{$S_0=12 $}\\ \hline 
10 	&	0.2875	& 	1.0720	&	2.4476	&	0.1655	&	0.4229	&	0.6705 \\
5	&	0.3128	& 	1.1264	&	2.4872	&	0.1751	&	0.4370	&	0.6826 \\
1  	& 	0.3222	&	1.1442	&	2.5014	&	0.1793	&	0.4433	&	0.6883 \\
-1 	& 	0.3253	&	1.1496	&	2.5060	&	0.1811	&	0.4461	&	0.6909 \\
-5 	& 	0.3296	&	1.1573	&	2.5124	&	0.1856	&	0.4535	&	0.6980 \\
-10	&	0.3333	&	1.1635	&	2.5178	&	0.1967	&	0.4723	&	0.7155 \\
\hline  
\multicolumn{7}{c}{Risk Neutral Price}\\
\hline \hline
MEMM	$p^E(0,S)$ 	&	0.3235 	&	1.1466 	&	2.5034 	&	0.1801 	&	0.4447 	&	0.6897\\
MMM $p^{MM}(0,S)$		&	0.3236 	&	1.1467 	&	2.5035 	&	0.1801 	&	0.4447 	&	0.6897\\
Adj B-S $P_{BS} (\tilde{T^0}(T),S)$	 &	0.3247 	&	1.1496 	&	2.5053 	&	0.1798 	&	0.4425 	&	0.6865\\
B-S  $P_{BS}(T,S)$	 	&	0.3534 	&	1.1924 	&	2.5441 	&	0.1857 	&	0.4404 	&	0.6764\\
\hline \hline
\end{tabular}}
\end{table}

To understand the nonlinear nature of the indifference pricing rule, Table \ref{tbl: price} summarizes the per-contract prices as we vary the quantity of contracts to purchase/sell. As expected, the indifference price decreases as the number of contract grows due to the increasing risk of holding more options. Also, the linear risk neutral prices (MEMM, MMM and adjusted B-S) are between the buyer and writer's prices for $n=\pm1$ (recall that MEMM is the limiting per-contract price as $n \to 0$). Recall that with exponential utility contract volume and risk-aversion asymptotics are equivalent, so Table \ref{tbl: price} can also be interpreted in terms of modifying the risk-aversion of the investor.

\subsection{Implied Time To Maturity}

As we discussed in Section \ref{sec:implied-ttm}, it is tempting to characterize the model prices using the Black-Scholes equivalent (i.e.~implied) time-to-maturity defined in (\ref{def: hatT}). As the Black-Scholes digital option price is not a monotone function of  TTM, it will only make sense to discuss Call/Put options' implied TTM.  The left panel of  Figure \ref{fig: call_ITTM} shows the adjusted $\tilde{T}^0(T-t; \Qm)$ and implied TTM $\hat{T}(t,S)$ as we vary the overall time-to-maturity $T-t$. We observe that $\tilde{T}^0(T-t)$ is almost linear in $t$ and provides a reasonable approximation to the indifference TTM $\hat{T}(t,\cdot)$. This implies that using an adjusted TTM within the classical Black-Scholes formula can be used as a simple correction to take into account liquidity risk.

The adjusted TTM $\tilde{T}$ is generally larger than the implied $\hat{T}$ due to the risk aversion and non-constant option time-decay. However, we find that this relationship is not universal; in fact for deep ITM or OTM options it may be reversed. This is illustrated in the right panel of Figure~\ref{fig: call_ITTM} that shows dependence of the implied TTM on the initial stock price $S_0$.
We observe a ``smile'' shape such that the most time-value lost is for at-the-money options. This is consistent with intuition that liquidity risk is related to the Charm and Gamma greeks which are largest at-the-money.

\begin{figure} [ht]
\centering \hspace*{-10pt}
\begin{tabular}{p{2.5in}p{2.5in}}
\includegraphics[width=2.6in,height=2in]{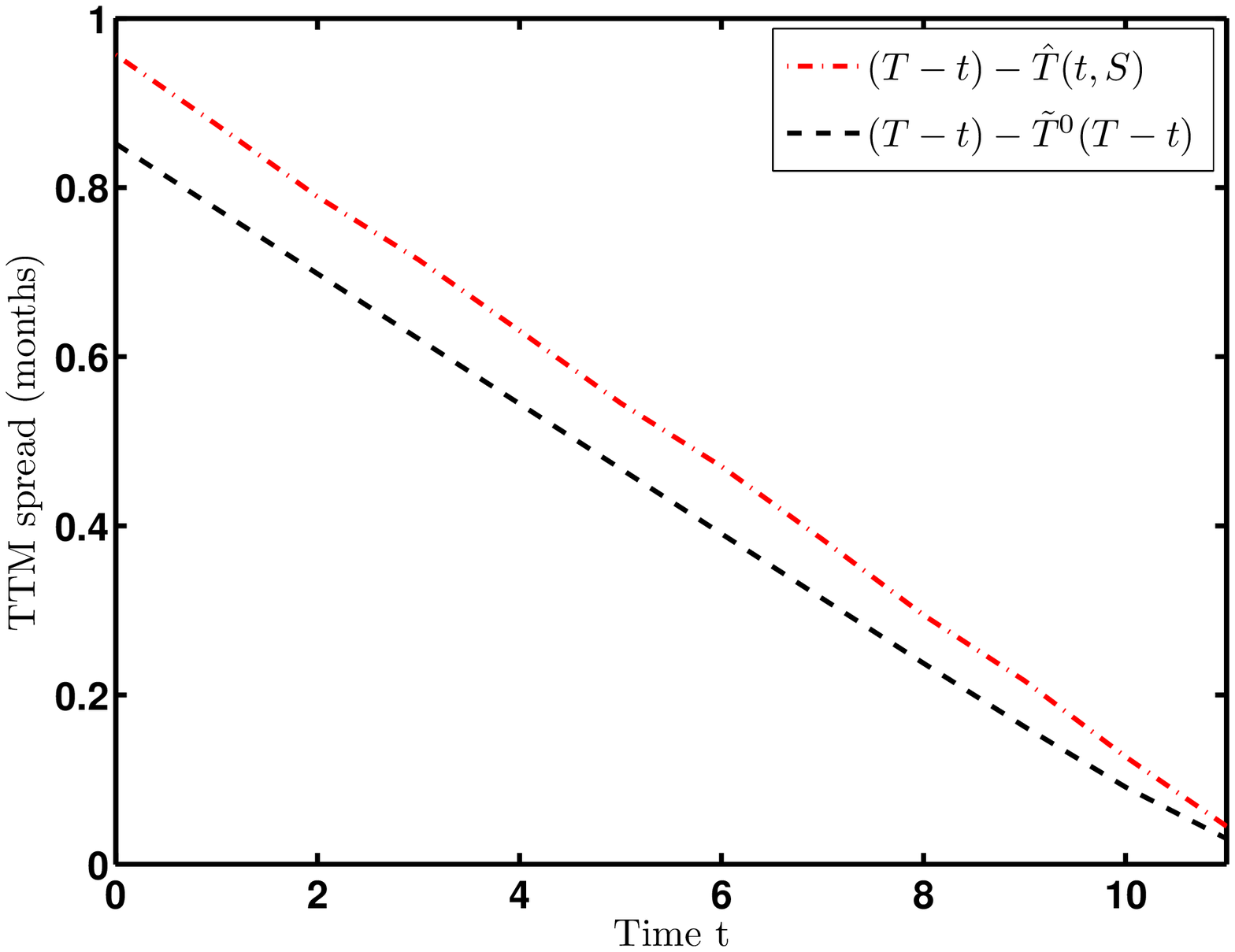}
&
\includegraphics[width=2.6in,height=2in]{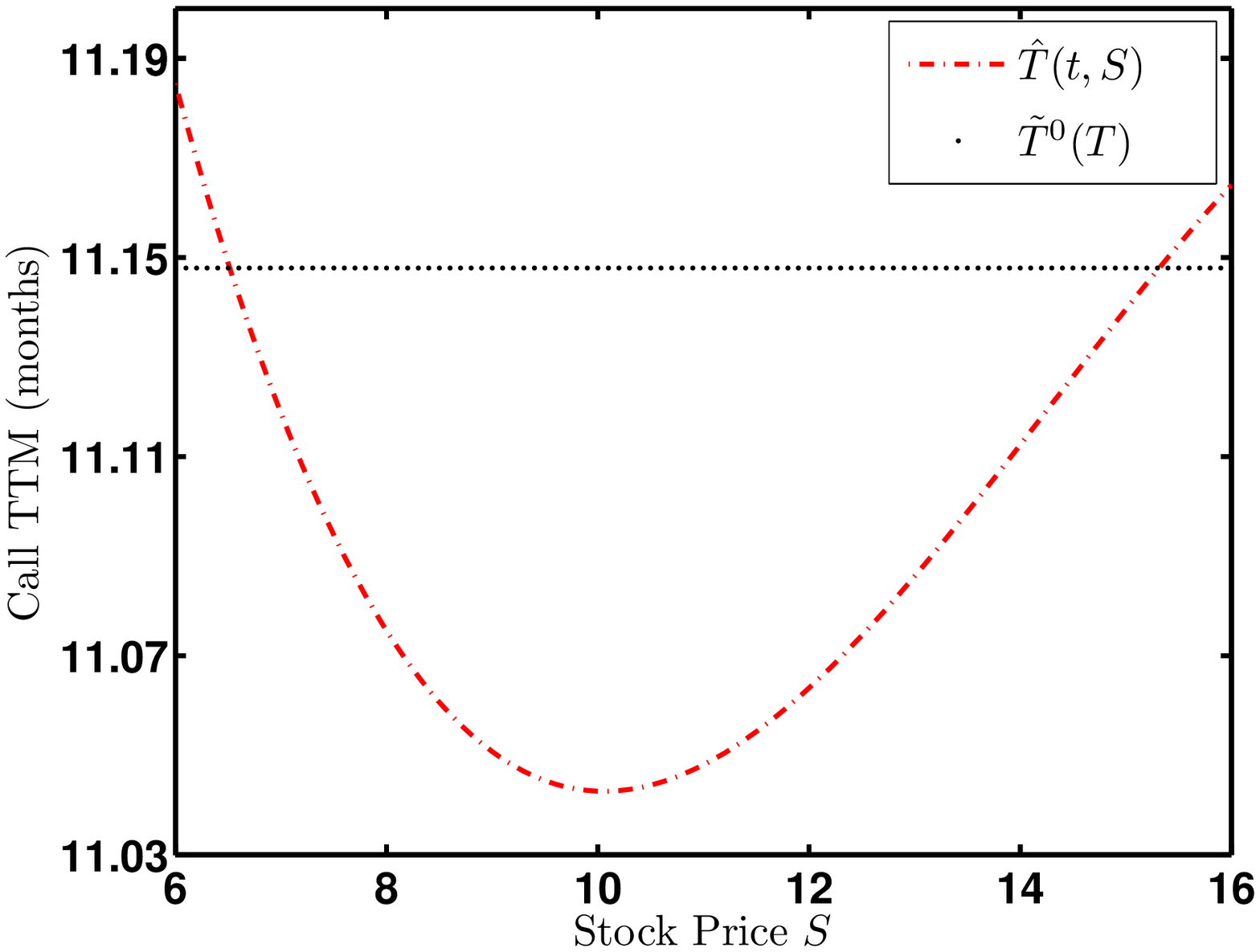}
\\
\end{tabular}
\centering
\caption{Adjusted/implied TTMs for a European Call. The parameters are as shown in Table \ref{tbl: params}. In the left panel we take $S_0 = 10$; in the right-panel we have $t=0$.}
\label{fig: call_ITTM}
\end{figure}

\subsection{Hedging}\label{sec:hedge}
 Recall that in the problem defining $U^i(t,X,S)$, the optimal stock holding in dollars is $\pi^* = \frac{\mu_0}{\sigma_0\gamma}-S\frac{\partial \Pb(t,S)}{ \partial S}$. The first term is the constant arising in the Merton's hedging problem, and the second term is the Delta hedge analogous to the classical Black-Scholes case except the price is the indifference price $p(t,S)$. Figure \ref{fig: recur_dig_delta_s0} compares this indifference hedge to the classical B-S Delta ${\Delta}_{BS}(T,S)\triangleq \frac{\partial P_{BS}(T,S)}{\partial S}$ using the original time-to-maturity $T$,  or the adjusted time-to-maturity $\tilde{T}^0(T-t; \Qm)$ from \eqref{eq:tildeT}.
\begin{figure} [h]
\centering
\includegraphics[width=4in]{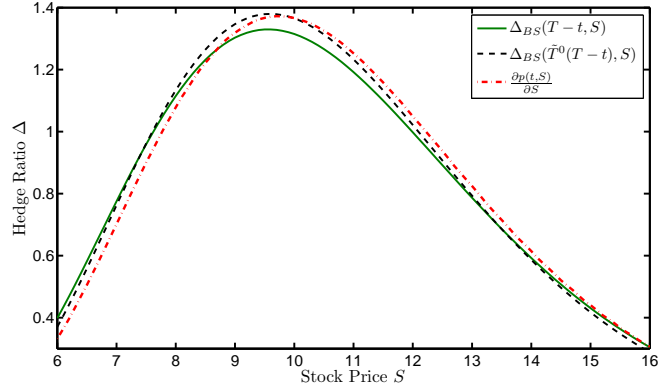}
\centering
\caption{The Deltas $\Delta_{BS}(T-t,S)$, $\Delta_{BS}({\tilde{T}}^0(T-t),S)$ (adjusted B-S delta) and $\frac{\partial p(t,S)}{\partial S}$ (indifference price delta) at $t=0$ for buying 10 digital Calls with strike $K=10$. We compute the sensitivity $\frac{\partial p}{\partial S}$ directly from the PDE \eqref{eq: bpricepde} using a first-order finite-difference approximation.
 The parameters are as shown in Table \ref{tbl: params}.}
\label{fig: recur_dig_delta_s0}
\end{figure}

Using $\Delta_{BS}$ to denote the classical Delta in the Black-Scholes model, and recalling that $p(t,S) =: P_{BS}(\hat{T}(t,S),S)$ we decompose
\begin{align*}
\frac{\partial p(t,S)}{\partial S} 
& = \Delta_{BS}(T-t,S) + \underbrace{
\Delta_{BS}(\tilde{T}^0(T-t),S) - \Delta_{BS}(T-t,S) 
}_{\text{adjusted TTM spread}} \\ & \quad +  \underbrace{
\Delta_{BS}(\hat{T}(t,S),S)-\Delta_{BS}(\tilde{T}^0(T-t),S)
}_{\text{implied TTM spread} }  + \underbrace{
\frac{\partial p(t,S)}{\partial S}- \Delta_{BS}(\hat{T}(t,S),S)
}_{\text{TTM smile correction}}.
\end{align*}
The three terms in the above comparison between the classical Black-Scholes Delta and the indifference hedge correspond to adjusting for expected time-value loss, further correcting for timing of the shocks and the risk-aversion (by using implied TTM $\hat{T}$) and finally a remainder term due to the fact that $\hat{T}(t,S)$ itself depends on $S$. Using $\Delta_{BS}(T_1,S) - \Delta_{BS}(T_2,S) \simeq \frac{\partial \Delta_{BS}(T_1,S)}{\partial T}(T_1 - T_2)$, we may then relate the Delta spread to the joint effect of the \emph{charm}  (or Delta decay, measuring the instantaneous rate of change of delta over the passage of time) and the TTM spread. Since we have that $\tilde{T}^0(T-t) < T -t$ and the digital Call charm is positive unless the option is deep ITM/deep OTM, we have
\[
\Delta_{BS}(\tilde{T},S) > \Delta_{BS}(T,S), \qquad S/K \in [0.7,1.3].
\]
The sign of $\frac{\partial p(t,S)}{\partial S} -\Delta_{BS}(\tilde{T},S)$ is more complicated, since implied TTM is ill-defined for digital options due to the changing sign of Theta. Overall, we observe that liquidity shocks make the hedges more extreme; thus  for OTM options, the risk-averse investor tends to hold fewer shares of stock; for ITM options the investor tends to hold more. An intuitive understanding is that the illiquidity shock reduces variance of $S_T$, so starting in-the-money one is more likely to end ITM, and hence the investor should hold more shares there, and vice versa.

\section{Conclusion}
In this paper, we considered the valuation and hedging problem of European options in a regime-switching market subject to liquidity shocks.
Investors exposed to such illiquidity risk will generally demand a positive risk premium (i.e.~offer to pay less to purchase options).
Our numerical experiments suggest that the entropic risk premium associated with the MEMM $\Qe$ is negligible in our model. On the other hand, indifference prices may impose significant liquidity premia depending on the option type. While these are small for vanilla Calls and Puts, we observed up to 10\% price differences for digital options (namely OTM digital Call).
For options with monotone time-value a simple way to adjust for liquidity risk is available through the adjusted TTM formalism which allows continued use of the Black-Scholes formulas after modifying the option's time-to-maturity. We believe this idea is a useful guide for traders who want a transparent rule to adjust their valuations from the base-case. In terms of the resulting hedging adjustments, the illiquidity threat causes the investor to hold more shares in-the-money and fewer shares out-of-the-money due to the reduced time-to-maturity; this correction can be approximated using adjusted Delta as well, and more generally by looking at the resulting TTM impact and the option Charm.

Even after several simplifying assumptions, our model remains complex. In \ref{app:L} we revisit a more general formulation of the impact of liquidity on asset dynamics; as can be seen in such generality little can be said beyond numerical studies. Consequently, we recommend to start with a well-understood model (e.g.~Black-Scholes) and then incrementally adjust for other risks (such as risk aversion, jump risk, illiquidity risk, etc.); in that case our analysis here, isolating impact of trading interruptions, can be used as one of such steps in obtaining the final risk-adjusted option value.

\appendix
\section{Proof of Proposition \ref{prop_EMMprice}} \label{app_EMMprice}
Under a general EMM ${\Q}^{\tilde{\alpha}}$, at time $t$, we prove that the risk neutral price $p^{\tilde{\alpha}}(t,S,0)$ with initial state $S_t = S, M_t=0$ can be written in the form (\ref{eq: gEMMprice}). Similar arguments for $p^{\tilde{\alpha}}(t,S,1)$ will be omitted.


Let $\tilde{\nu}_{ij} \triangleq \alpha_t(i,j) \nu_{ij}$ be the transition rates of $(M_t)$ under $\Q^\alpha$.
Denote by $\tau_{01}$ (resp.~$\tau_{10}$) the first transitioning time from liquid state 0 to illiquid state 1 (resp.~from state 1 to state 0), which have intensities $\tnu_{01}$ and $\tnu_{10}$. Therefore, starting with $M_t=0$, we have two possible cases, represented by $\{\tau_{01}>T\}$, and $\{t<\tau_{01}<T\}$ leading to the following expansion of the price
\begin{align*}
\E^{\tilde{\alpha}}_{t,S,0} [h(S_T)]&= \E^{\tilde{\alpha}}_{t,S,0}\left[ h(S_T) \ind_{\{\tau_{01} > T\}}+ \E^{\tilde{\alpha}}_{\tau_{01},S_{\tau_{01}},1} [h(S_T)]\ind_{\{t < \tau_{01} \le T\}} \right] \nonumber \\
&= {\Q}^{\tilde{\alpha}}_{t,S,0}(\tau_{01}>T)  \E^{\tilde{\alpha}}_{t,S}\left [ h(S_T)\right] + \int^T_t g_{01}(\tau) \E^{\tilde{\alpha}}_{\tau,S_{\tau},1} [h(S_T)] \, d\tau \nonumber
\end{align*}
where the last equality is due to the independence of $(S_t)$ and $(N_{01}(t))$ and $g_{01}(\cdot)$ is the density of $\tau_{01}$. Substituting
$
{\Q}^{\tilde{\alpha}}(\tau_{01}>T)
=\exp\bigl(-\int^T_t \tnu_{01}(u)du \bigr),
$
and
$$
g_{01}(\tau)=\frac{\partial}{\partial s}{\Q}^{\tilde{\alpha}}_{t}(\tau_{01}\le s) \Big\vert_{s=\tau}=\tnu_{01}(\tau)\exp\Bigl( -\int^\tau_t \tnu_{01}(u)du\Bigr),
$$
yields \eqref{eq: gEMMprice}.

\section{Proof of Proposition \ref{prop_hatU} and Theorem \ref{thm:classical-soln} } \label{app_HJB}
Let us first sketch out the main four steps of the proof. First, through the standard stochastic control arguments, we derive the following HJB equation for  $\hat{U}^i(t,X,S)$,
\begin{align} \label{eq: Uhat}
\hat{U}^0_t &+ \mu_0S \hat{U}^0_S + \frac{1}{2}\sigma_0^2S^2 \hat{U}^0_{SS} +\nu_{01}(\hat{U}^1-\hat{U}^0) \\
 & \qquad +\sup_{\pi}\bigl\{\mu_0X\hat{U}^0_x \pi+ \frac{1}{2}\sigma_0^2X^2 \hat{U}^0_{xx}\pi^2+ \sigma_0^2SX \hat{U}^0_{Sx}\pi \bigr\}=0, \nonumber \\ \nonumber
\hat{U}^1_t&+\nu_{10}(\hat{U}^0-\hat{U}^1)=0,
\end{align}
with terminal condition $\hat{U}^i(T,X,S)=-e^{-\gamma(X+h(S))}$. Using the scaling properties of exponential utility, we consider a candidate solution of the form $\hat{U}^i(t,X,S) =-e^{\gamma X}e^{-\gamma R^i(t,S)}$.  Substituting into \eqref{eq: Uhat} yields that $R^i(t,S)$ must satisfy the equation (\ref{eq: Rhat}). The difficulty arises due to the coupled form of \eqref{eq: Rhat} which poses several technical challenges.

To circumvent this problem, in our second step, we consider an approximation scheme $R^{(i,k)}$ defined in \eqref{eq: JPDE}  for $R^{i}$ and show that the corresponding functions $R^{(i,k)}$ are bounded and are classical solutions of the respective PDEs. This is the longest part of the proof and relies on a doubly-iterative setup. Third, we provide a stochastic control representation for $R^{(i,k)}$ which shows the convergence $\lim_{k\to\infty} R^{(i,k)} \to R^i$. Finally, using the properties of $R^{(i,k)}$ we establish a verification theorem for $R^i$.

To introduce our iterative scheme, we consider a model with at most $k \in \mathbb{N}$ liquidity shocks on $[0,T]$ (a similar strategy is used in \cite{diesinger2008asset}). Formally, let
$$
K(t) := \max \{ k : \tau_{01}^k \le t \}
$$
denote the number of liquidity shocks by time $t$. Then we define a counting process $(N_{01}^k(t))$ with intensity $\nu_{01} \ind_{\{ M_t = 0\}} \ind_{\{ K(t) < k \}}$ and use the superscript $k$ to denote the resulting wealth process $(X^{k}_t)$ and stock process $(S^k_t)$. In analogue to the original $\hat{U}^i (t,X,S)$ we then consider the value functions
 $\hat{U}^{(i,k)}(t,X,S)$ defined in \eqref{def:hatU_ik}.

\subsection*{Step 1: Control Representation}
Usual arguments imply that $\hat U^{(i,k)}$ from \eqref{def:hatU_ik} satisfy (in viscosity sense)
\begin{align} \label{eq: JPDE}
\hat U^{(0,0)}&=-e^{-d_0(T-t)}e^{-\gamma X}e^{-\gamma \tE[h(S_T)]}, \nonumber\\
\hat U^{(1,k)}_t&+\nu_{10}(\hat U^{(0,k)}-\hat U^{(1,k)})=0, \qquad k=0,1,\ldots \nonumber \\
\hat U^{(0,k)}_t&+\frac{1}{2}\sigma_0^2S^2 \hat U^{(0,k)}_{SS}+\mu_0 S \hat U^{(0,k)}_{S}+\nu_{01}( \hat U^{(1,k-1)}- \hat U^{(0,k)}) \nonumber \\
+ &\sup_{\pi}\bigl\{\frac{1}{2}\sigma_0^2X^2 \hat U^{(0,k)}_{xx}\pi^2 +\mu_0X  \hat U^{(0,k)}_{S}\pi+ \mu_0XS \hat U^{(0,k)}_{Sx}\pi \bigr\}=0, \quad k=1,\ldots,
\end{align}
where the terminal condition is $\hat U^{(i,k)}(T,X,S)=-e^{-\gamma X}e^{-\gamma h(S)}$,  $i=0,1$.

Making the ansatz
\begin{align}\label{eq: ansatz}
\hat U^{(0,k)}(t,X,S)=-e^{-\gamma X}e^{-\gamma R^{(0,k)}(t,S)}\quad\text{and}\quad \hat U^{(1,k)}(t,X,S)=-e^{-\gamma X}R^{(1,k)}(t,S)
\end{align}
and substituting the candidate optimizer $\pi^{(i,0),*}(t,X,S)=-\frac{\mu_0}{\sigma_0^2 X}\frac{\hat U^{(0,k)}_S+S\hat U^{(0,k)}_{xs}}{\hat U^{(0,k)}_{xx}}$ into equation (\ref{eq: JPDE}), we are led to consider the following iterative equations for $R^{(i,k)}$'s,
\begin{subequations}  \label{eq: RPDE}
\begin{align}
&R^{(0,0)}(t,S)=P_{BS}(T-t,S)+\frac{d_0}{\gamma}(T-t),  \label{eq: RPDE_1} \\
&R^{(1,k)}-\nu_{10}R^{(1,k)}+\nu_{10}e^{-\gamma R^{(0,k)}}=0, \quad\qquad\qquad\quad R^{(1,k)}(T,S)=e^{-\gamma h(S)}, \label{eq: RPDE_3} \\
&R^{(0,k)}_t+\frac{1}{2}\sigma_0^2S^2R^{(0,k)}_{SS}-f(t,S,R^{(1,k-1)},R^{(0,k)})=0, \quad R^{(0,k)}(T,S)=h(S), \label{eq: RPDE_5}
\end{align}
\end{subequations}
where $k=0,1,\ldots$ and
\begin{align}
f(t,S,Q,R) \triangleq\frac{\nu_{01}}{\gamma}Q e^{\gamma R}-\frac{d_0+\nu_{01}}{\gamma}.
\end{align}

Once we show in Step 2 that there exists a classical solution to \eqref{eq: RPDE_5}, our ansatz \eqref{eq: ansatz} will be validated and the control representation \eqref{def:hatU_ik} immediately follows.

\subsection*{Step 2: Classical Solutions}
The present step is to prove that equations (\ref{eq: RPDE}) have bounded classical smooth solutions $R^{(i,k)}(t,S)$'s for $t \in [0,  T]$, $S \in (0,  \infty)$, $i=0,1$, and $k < \infty$.

By assumption, $h(S)$ is bounded, so there exist constants $A,B$ such that $A \le h(S) \le B$, for all $S$.
It follows that
 $A \le P_{BS}(T-t,S) = \tilde{\E}[ h(S_T) ] \le B$, for all $t$ and $S$. Being related to the classical B-S price, $R^{(0,0)}(t,S)$ is in $C^{1,2}([0,T) \times \R_+)$ and satisfies $$A \le R^{(0,0)}(t,S) \le B+\frac{d_0}{\gamma}(T-t).
   $$ Next, $R^{(1,0)}$ follows a linear ODE in (\ref{eq: RPDE_3}), thus we have explicitly $$R^{(1,0)}(t,S)=e^{-\gamma h(S)}e^{-\nu_{10}(T-t)}+\int^T_t\nu_{10}e^{-\gamma R^{(0,0)}(u,S)}e^{-\nu_{10}(u-t)}\, du.$$ It is easy to see that  $e^{-\gamma B-d_0(T-t)} \le R^{(1,0)}(t,S) \le e^{-\gamma A},  \forall (t, S)$.

We now use induction to show that there exists a series of smooth and bounded functions $R^{(0,k)}$ and $R^{(1,k)}$, $k=1,2,\ldots$, which are the solutions to equations (\ref{eq: RPDE}). To apply the inductive argument it is sufficient to show that there exists a classical solution $R^{(0,1)}$ to the semi-linear parabolic PDE  (\ref{eq: RPDE_5}) such that $A \le R^{(0,1)}(t,S) \le B+\frac{d_0}{\gamma}(T-t)$ (same argument as for $R^{(1,0)}$ will then take care of $R^{(1,1)}$).
Typical existence and uniqueness results for solutions to semi-linear parabolic equations, such as \cite{becherer2001rational}, do not apply here because they require that the non-linear term $f(t,S,Q,R)$ be uniformly H\"{o}lder-continuous on the domain. In our equation, however, the non-linear function $f(t,S,Q,R)$ is not globally H\"{o}lder continuous in variable $R$. We therefore follow Zhou  \cite{zhou2006indifference} to resolve this difficulty through a further iterative argument. The main idea is to convert \eqref{eq: RPDE_5} into a series of linear PDEs, and using the comparison principle prove that the linear PDEs solutions converge to the solution of the original semi-linear PDE. 

First, we prove that if the solution $R^{(0,1)}(t,S)$ exists, then it is bounded.
Let us define
\[
\overline{R}(t,S) \triangleq B+\frac{d_0}{\gamma}(T-t)\qquad\text{ and }\qquad \underline{R}(t,S)\triangleq A.
\] Then
\begin{align}
&\overline{R}_t+\frac{1}{2}\sigma_0^2S^2\overline{R}_{SS}=-\frac{d_0}{\gamma} \le f(t,S,R^{(1,0)},\overline{R}),\qquad \overline{R}(T,S)=B\ge h(S), \label{R_bar}\\
&\underline{R}_t+\frac{1}{2}\sigma_0^2S^2\underline{R}_{SS}=0 \ge f(t,S,R^{(1,0)},\underline{R}), \quad\qquad \underline{R}(T,S)=A \le h(S). \label{R_under_bar}
\end{align}
So by the comparison principle stated in section (i) in the Appendix of \cite{zhou2006indifference}, we have $A \le R^{(0,1)}(t,S) \le B+\frac{d_0}{\gamma}(T-t)$ if it exists. Note that $f(t,S,Q,\cdot)$ is not globally H\"{o}lder continuous, but it is locally Lipschitz continuous in $R$ when $R \in [A,  B+\frac{d_0}{\gamma}T]$. Therefore, there exist positive constants $\underline{C}$ and $\overline{C}$ such that
\begin{align} \label{C_bar}
-\underline{C}(R^1-R^2)\le f(t,S,Q,R^1)-f(t,S,Q,R^2) \le \overline{C}(R^1-R^2)
\end{align}
for $A = \underline{R}\le R^2 \le R^1 \le\overline{R} \le B + \frac{d_0}{\gamma} T$ and $S\in \mathbb{R}^+$. Note that $\underline{C}$ and $\overline{C}$ are independent of $R^1$, $R^2$, $S$ and $t$.

Next, we construct the sequence $R^{(0,1)(j)}$ via $R^{(0,1)(0)}=\overline{R}$ and for $j=1,2,\ldots$,
\begin{align} \label{eq: iterPDE}
R^{(0,1)(j)}_t & +\mathcal{L}R^{(0,1)(j)}  =\tilde{f}(t,S,R^{(0,1)(j-1)}),  \qquad\textrm{ where } \\
 \mathcal{L}R  & \triangleq \frac{1}{2}\sigma_0^2S^2 R_{SS}-\overline{C}R, \qquad\text{and}\quad \tilde{f}(t,S,R) \triangleq f(t,S,R^{(1,0)}, R)-\overline{C}R,\nonumber
\end{align}
with terminal condition $R^{(0,1)(j)}(T,S)=h(S)$. The PDE (\ref{eq: iterPDE}) is linear, so  classical existence and uniqueness results guarantee we have a series of smooth solutions $R^{(0,1){(j)}}, j=1,2,\ldots$. Next we use induction to prove that $\underline{R} \le \cdots \le R^{(0,1)(j)} \le R^{(0,1)(j-1)} \le \cdots \le R^{(0,1)(1)} \le  R^{(0,1)(0)}=\overline{R}$.

First, we prove $\underline{R} \le R^{(0,1)(1)} \le  R^{(0,1)(0)}=\overline{R}$. From (\ref{R_bar}) and (\ref{eq: iterPDE}), we have
\begin{align}
(\overline{R}-R^{(0,1)(1)})_t+\mathcal{L}(\overline{R}-R^{(0,1)(1)}) \le 0, \nonumber\\
(\overline{R}-R^{(0,1)(1)})(T,S) \ge 0. \nonumber
\end{align}

On the other hand, similarly, we have
\begin{align}
(R^{(0,1)(1)}-\underline{R})_t+\mathcal{L}(R^{(0,1)(1)}-\underline{R})&=f(t,S,\overline{R})-\overline{C}\overline{R}+ \overline{C}\underline{R} \nonumber\\
& \le f(t,S,\overline{R})-f(t,S,\underline{R}) -\overline{C}\overline{R}+ \overline{C}\underline{R} \textrm{  (by (\ref{R_under_bar}))}\nonumber\\
& \le 0 \textrm{ (by (\ref{C_bar}))}, \nonumber \\
(R^{(0,1)(1)})(T,S)-\underline{R} &\ge 0. \nonumber
\end{align}
Therefore, by the comparison principle, we have $\underline{R} \le R^{(0,1)(1)} \le  R^{(0,1)(0)}=\overline{R}$. Now we assume $\underline{R} \le R^{(0,1)(j)} \le R^{(0,1)(j-1)} \le \cdots \le \overline{R}$, and prove that $\underline{R} \le R^{(0,1)(j+1)}\le R^{(0,1)(j)}$.

By (\ref{eq: iterPDE}), we obtain
\begin{align}
&(R^{(0,1)(j)}-R^{(0,1)(j+1)})_t+\mathcal{L}(R^{(0,1)(j)}-R^{(0,1)(j+1)})\nonumber \\
   &=\tilde{f}(t,S,R^{(0,1)(j-1)})-\tilde{f}(t,S,R^{(0,1)(j)}) \nonumber \\
&=f(t,S,R^{(1,0)}, R^{(0,1)(j-1)})-\overline{C}R^{(0,1)(j-1)}-[f(t,S,R^{(1,0)}, R^{(0,1)(j)})-\overline{C}R^{(0,1)(j)}] \nonumber\\
&=[f(t,S,R^{(1,0)}, R^{(0,1)(j-1)})-f(t,S,R^{(1,0)}, R^{(0,1)(j)}) ]-\overline{C}[R^{(0,1)(j-1)}-R^{(0,1)(j)}] \nonumber\\
& \le 0 \textrm{ (by (\ref{C_bar}))}, \qquad\textrm{ and } (R^{(0,1)(j)})(T,S)-(R^{(0,1)(k+1)})(T,S) = 0. \nonumber
\end{align}
\noindent
On the other hand, we have
\begin{align}
&(R^{(0,1)(j+1)}-\underline{R})_t+\mathcal{L}(R^{(0,1)(j+1)}-\underline{R}) \nonumber \\
&=f(t,S,R^{(1,0)},R^{(0,1)(j)})-\overline{C}R^{(0,1)(j)}+ \overline{C}\underline{R} \nonumber\\
& \le [f(t,S,R^{(1,0)},R^{(0,1)(j)})-f(t,S,R^{(1,0)},\underline{R})] -\overline{C}[R^{(0,1)(j)}-\underline{R}] \quad\textrm{  (by (\ref{R_under_bar}))}\nonumber\\
& \le 0 \textrm{ (by (\ref{C_bar}))}, \qquad\textrm{ and } (R^{(0,1)(1)})(T,S)-\underline{R}(T,S)\ge 0. \nonumber
\end{align}
\noindent
So, by the comparison principle again, we have $\underline{R}\le  R^{(0,1)(j+1)} \le R^{(0,1)(j)}$, completing  the induction on $j$.

Lastly, we prove that the sequence $\{R^{(0,1)(j)}\}$ converges to the solution of \eqref{eq: RPDE_5} for $k=1$ as $j\to\infty$. Notice that $\{R^{(0,1)(j)}\}$ is a monotone decreasing sequence, bounded below by $\underline{R}$, for all $(t,S) \in [0, T] \times \mathbb{R}^+$, hence, it converges to some $r^{(0,1)}(t,S)$ pointwise. Using the fact that $R^{(0,1)(j)}$ are bounded, it follows that $\tilde{f}(t,S,R)$ is of at most exponential growth in $S$ so by Appendix A in \cite{zhou2006indifference} we obtain that $r^{(0,1)}$ is in ${C}^{1,2}( [0, T] \times \mathbb{R}^+)$ and hence is the unique classical solution of (\ref{eq: RPDE_5}) for $k=1$. This completes the inductive step in $k$.

\subsection*{Step 3: Convergence}
To establish the uniform convergence in \eqref{eq:converge}, we apply  Theorem 3.2 in \cite{diesinger2008asset} since
the exponential utility $u(x)=-e^{-\gamma x}$ is polynomially bounded at 0. It follows that for $k$ large enough, $U^{(i,k)}$ is arbitrarily close to $U^{i}$ which is intuitive since the probability of more than $k$ shocks is exponentially small in $k$. However, since we do not have control on the rate of convergence, we cannot establish such properties for the corresponding controls $\pi^{(0,k),*}$ so it is not clear whether the hedging strategies also converge.

\subsection*{Step 4: Verification Theorem}
We are finally in position to provide a verification theorem for the original PDE \eqref{eq: Rhat}. Similar arguments to those below also provide a verification theorem for the iterative \eqref{eq: JPDE} and are omitted.

\begin{lemma}\label{lem: J} Suppose that we have $J^i(t,S,X)$ and
$\psi(t,X,S)$, 
such that
\begin{enumerate}
\item $J^{i}(t,X,S),i=0,1$ is in $C^{1,2,2}([0, T) \times \mathbb{R} \times \mathbb{R}^+)$ solve the HJB equations (\ref{eq: Rhat});

\item $J^{0}_S\sigma_0 S+J^{0}_x \sigma_0\pi X \in L^2$ for all admissible control laws $(\pi_t)$;

\item $\psi(t,X,S)$ is an admissible control;

\item For each fixed $(t,X,S)$,
$$
\hspace{-0.1in}\sup_{\pi}\Bigl\{\frac{1}{2}\sigma_0^2X^2J^{0}_{xx}\pi^2 +\mu_0X  J^{0}_{S}\pi+ \mu_0XS J^{0}_{Sx}\pi \Bigr\} =
\frac{1}{2}\sigma_0^2X^2J^{0}_{xx} \psi^2 +(\mu_0X  J^{0}_{S}+\mu_0XS J^{0}_{Sx})\psi.
$$
\end{enumerate}
Then  the optimal value function of \eqref{def:hatU_ik} is $\hat{U}^{i}(t,X,S)=J^{i}(t,X,S)$.
\end{lemma}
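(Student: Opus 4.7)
The plan is to run a standard stochastic-control verification argument, adapted to the regime-switching setting, so that the candidate $J^i$ provides both an upper bound on $\hat U^i$ along every admissible strategy and an attained value along the specific feedback law $\psi$.

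First I would fix $(t,X,S,i)$ and an arbitrary admissible $(\pi_s)_{s\in[t,T]}\in\mathcal A$ and apply Itô's formula to the process $J^{M_s}(s,X_s,S_s)$ on $[t,T]$. The continuous part contributes an infinitesimal generator of Black--Scholes type while $M_s=0$, namely $J^{0}_t+\mu_0 S J^{0}_S+\tfrac12\sigma_0^2 S^2 J^{0}_{SS}+\mu_0\pi X J^{0}_x+\tfrac12\sigma_0^2\pi^2 X^2 J^{0}_{xx}+\sigma_0^2\pi S X J^{0}_{Sx}$, and reduces to $J^{1}_t$ when $M_s=1$ (since then $dX=dS=0$). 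The jumps of $M$ contribute $J^{1-M_{s-}}-J^{M_{s-}}$ at each transition time; writing these through their compensators $\nu_{01}\mathbf 1_{\{M_{s-}=0\}}\,ds$ and $\nu_{10}\mathbf 1_{\{M_{s-}=1\}}\,ds$ splits the jump contribution into a $\G$-martingale and a predictable drift $\nu_{M_{s-},1-M_{s-}}(J^{1-M_{s-}}-J^{M_{s-}})\,ds$.

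Collecting the drift pieces in regime $0$ reproduces exactly the left-hand side of the first equation in \eqref{eq: Rhat}, up to the quadratic-in-$\pi$ expression inside the supremum of hypothesis (4); by (1) and (4) this drift is therefore pointwise $\le 0$, with equality at $\pi=\psi$. In regime $1$ the drift is $J^{1}_t+\nu_{10}(J^0-J^1)=0$ by the second equation of \eqref{eq: Rhat}. The Brownian stochastic integral is a true martingale by hypothesis (2), and the compensated jump integral of the bounded integrand $J^{1-M_{s-}}-J^{M_{s-}}$ is a martingale as well (on the set where these quantities remain bounded, which follows from the uniform bounds on $R^i$ inherited from Step 2 of the construction and the exponential ansatz). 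Taking $\E^\PP_{t,X,S,i}$ of the resulting Dynkin identity therefore yields
\begin{equation*}
J^{i}(t,X,S)\;\ge\;\E^\PP_{t,X,S,i}\!\left[J^{M_T}(T,X_T,S_T)\right]\;=\;\E^\PP_{t,X,S,i}\!\left[-e^{-\gamma(X_T+h(S_T))}\right],
\end{equation*}
where the terminal identity uses the common terminal condition $J^{i}(T,X,S)=-e^{-\gamma(X+h(S))}$. Taking the supremum over admissible $(\pi_s)$ gives $J^{i}(t,X,S)\ge \hat U^{i}(t,X,S)$.

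For the reverse inequality I would substitute $\pi_s=\psi(s,X^\psi_s,S_s)$, which is admissible by hypothesis (3) and which, by hypothesis (4), turns the drift in regime $0$ into an identity rather than an inequality while leaving the regime-$1$ drift unchanged at $0$. The same martingale argument now produces an equality, so $J^{i}(t,X,S)=\E^\PP_{t,X,S,i}[-e^{-\gamma(X_T^\psi+h(S_T))}]\le \hat U^{i}(t,X,S)$, completing the proof. The main obstacle I anticipate is purely technical: justifying that both the Brownian and the compensated-jump stochastic integrals are genuine (not merely local) martingales under an arbitrary admissible $\pi$; condition (2) handles the continuous part directly, while for the jump part one has to exploit the boundedness of $J^{1-i}-J^{i}$, possibly via a localizing sequence of stopping times followed by dominated convergence. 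Once that integrability bookkeeping is under control the rest of the argument is routine.
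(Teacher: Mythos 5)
Your argument is correct and constitutes a genuinely different route from the paper's. You apply It\^o's formula once, directly to the regime-indexed process $J^{M_s}(s,X_s,S_s)$ on all of $[t,T]$, decompose the jump contribution of $(M_s)$ through its compensator, use the HJB inequalities to sign the resulting drift, and take a single expectation. The paper instead integrates $J^0$ only up to the first shock time $\tau^1_{01}$, invokes the fact that $(X_s,S_s)$ are frozen in the illiquid regime to write $J^1$ as an explicit conditional expectation of $J^0$ at $\tau^1_{10}$, iterates by induction over the successive transition pairs $(\tau^k_{01},\tau^k_{10})$, and finally passes to the limit $k\to\infty$ by dominated convergence using that $\tau^k_{10}\to\infty$ a.s. Your approach buys brevity and avoids the infinite induction, at the cost of needing the semimartingale It\^o formula for Markov-modulated processes and of having to argue upfront that the compensated-jump integral is a true martingale. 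Here you should be slightly more careful than "the integrand is bounded": $J^{1-M_{s-}}-J^{M_{s-}} = -e^{-\gamma X_s}\bigl(e^{-\gamma R^{1-M_{s-}}}-e^{-\gamma R^{M_{s-}}}\bigr)$ is bounded in $(t,S)$ thanks to the uniform bounds on $R^i$, but not uniformly in $X$, since $e^{-\gamma X}$ is unbounded; one needs the wealth-integrability coming from hypothesis (2) (or a localizing sequence, as you note). The paper's regime-by-regime scheme sidesteps this by never integrating the jump term at all --- it conditions on the next transition time instead --- which is why it ends with a dominated-convergence step rather than a jump-martingale step. Both routes are valid; yours is the more standard modern verification argument for regime-switching control, the paper's is more elementary but longer.
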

\begin{proof}

Given an arbitrary admissible strategy $(\pi_t) \in \mathcal{A}$, we define the process $X^{(\pi)}$ as the solution to the equation $dX^{(\pi)}=\ind_{\{M_t = 0\}}[\mu_0 \pi X^{(\pi)}dt+\sigma_0  \pi X^{(\pi)}\,dW_t]$.
Applying Ito's formula to  (\ref{eq: JPDE}), we obtain
\begin{multline}\label{eq:dJ}
dJ^{0}(t,X^{(\pi)}_t,S_t)=J^{0}_t \, dt + J^{0}_S( \mu_0S_t\, dt +\sigma_0S_t\, dW_t)+\frac{1}{2}J^{0}_{SS}\sigma_0^2S^2_t\,dt \\
                                          + J^{0}_x( \mu_0\pi X_t\, dt +\sigma_0\pi X_t\,dW_t)+\frac{1}{2}J^{0}_{xx}\sigma_0^2\pi^2X^2_t\, dt+(J^{1}-J^{0})\, dN_{01}(t).
\end{multline}
Note: for notational convenience, we omitted the parameters $(t,X^{(\pi)}_t,S_t)$ in $J^0$.

Then, in the above equation we replace $dN_{01}$ with $d\tilde{N}_{01}+\nu_{01}dt$, where $\tilde{N}_{01}$ is the compensated Poisson process associated with $N_{01}$. Thanks to the integrability assumptions, the stochastic differentials in \eqref{eq:dJ} are stochastic differentials of $\mathbb{P}$-martingales. Therefore, by using the boundary conditions of the HJB equations (\ref{eq: JPDE}), integrating up to the first liquidity shock instant $\tau^1_{01}$, and taking expectations on both sides under $\mathbb{P}$ measure, we have
\begin{align} \label{eq: UJPDE}
\E^\PP_{t,X,S}  \Big[ \ind_{\{\tau^1_{01} < T\}}   J^{1}(\tau^1_{01},& X^{(\pi)}_{\tau^1_{01}},S_{\tau^1_{01}}) + \ind_{\{\tau^1_{01} \ge T\}}u(X^{(\pi)}_T+h(S_T)) \Big]\nonumber \\ =J^{0}(t,X,S)  + \E^\PP_{t,X,S}  &\Big[\int^{\tau^1_{01} \wedge T}_t \!\! \Big\{ J^{0}_t
+\frac{1}{2}\sigma_0^2S^2J^{0}_{SS}
 +\mu_0 S J^{0}_{S}+\frac{1}{2}\sigma_0^2X^2J^{0}_{xx}\pi^2 \nonumber \\  & \qquad +\mu_0X  J^{0}_{S}\pi  +\mu_0XS J^{0}_{Sx}+\nu_{01}( J^{1}- J^{0})\pi\Big\} d\tau \Big].
\end{align}
The parameters in $J$-functions in the integral are $(\tau,X^{(\pi)}_\tau,S_\tau)$ in the right hand side of equations.
Since the $J$'s satisfy the HJB equations (\ref{eq: JPDE}), the term inside the expectation on the RHS is negative a.s., and therefore
\begin{align}\label{eq: le}
\E^\PP_{t,X,S} \left[ \ind_{\{\tau^1_{01} < T\}} J^{1}(\tau^1_{01},X^{(\pi)}_{\tau^1_{01}},S_{\tau^1_{01}}) + \ind_{\{\tau^1_{01} \ge T\}} u( X^{(\pi)}_T+h(S_T)) \right]\le&J^{0}(t,X,S).
\end{align}
Using the fact that $(X_t,S_t)$ are static on $\{ M_t = 1\}$, so that
$$
J^1(t,X,S) = \E^\PP_{t,X,S} \left[ \ind_{\{ \tau^1_{10} < T\}} J^0(\tau^1_{10},X,S) + \ind_{\{ \tau^1_{10} \ge T\}} u( X + h(S)) \right],
$$
and applying the strong Markov property of $(N_{ij}(t))$ at $\tau^1_{01}$ we obtain
\begin{align*}
\E^\PP_{t,X,S} \left[ \ind_{\{\tau_{10}^1 < T\}} J^{0}(\tau_{10}^1,X^{(\pi)}_{\tau_{10}^1},S_{\tau_{10}^1}) + \ind_{\{\tau_{10}^1 \ge T\}} u( X^{(\pi)}_T+h(S_T)) \right]\le&J^{0}(t,X,S).
\end{align*}
By induction on the transition times $(\tau_{01}^k, \tau_{10}^k)$ it follows that
\begin{align}\label{eq: le-k}
\E^\PP_{t,X,S} \left[ \ind_{\{\tau_{10}^k < T\}} J^{0}(\tau_{10}^k,X^{(\pi)}_{\tau_{10}^k},S_{\tau_{10}^k}) + \ind_{\{\tau_{10}^k \ge T\}} u( X^{(\pi)}_T+h(S_T)) \right]\le&J^{0}(t,X,S).
\end{align}
Since $\tau_{10}^k \to \infty$ a.s., and using the fact that $J^0(\cdot,X^{(\pi)}_\cdot,S_\cdot)$ is bounded from below, we conclude using the dominated convergence theorem that
\begin{align}\label{eq: le-infty}
J^{0}(t,X,S) \ge  \E^{\PP}_{t,X,S} \left[ -\exp \left( -\gamma\{X^{(\pi)}_T+h(S_T)\} \right)\right].
\end{align}

On the other hand, if we choose the control law $\psi$, the term inside the expectation in \eqref{eq: UJPDE} is zero a.s., yielding
\begin{align} \label{eq: ge}
\E^\PP_{t,X,S}\left[-\exp(-\gamma \{X^{(\psi)}_T+h(S_T)\}) \right]=J^{0}(t,X,S).
\end{align}
Combining (\ref{eq: le-infty}) and (\ref{eq: ge}) we conclude $\hat{U}^{i}(t,X,S)=J^{i}(t,X,S)$, $i=0,1$.
\end{proof}

\section{Numerical Scheme} \label{app_scheme}
To solve for the indifference price $p(t,S)$, we apply implicit finite-difference scheme to the system \eqref{eq: bpricepde}.
Working with the log-price $Z_t \equiv \log S_t$, we use an equally spaced mesh $(\iota, \mathcal{Z})$, with grid points $(t_i,Z_j)$,
$t_i = i\delta_t$, $i=0,\ldots,N$, $\delta_t = T/N$ and $Z_j = Z_{min} + j\delta_z$, $j=0, \ldots, M$, with
\begin{align}
\delta_z=\sqrt{\sigma_0^2 \delta_t + \bigl(\frac{1}{2}\sigma_0^2 \bigr)^2\delta_t^2}. \label{eq:dz}
\end{align}
Letting $p^i_j = p(t_i,z_j)$ denote the option price on the grid, and similarly for $q^i_j$,
we use the standard second-order centered finite-difference approximation (see e.g.~\cite{DewynneWilmott}) for all the first and second derivatives of $p(t,S)$ and use one-sided approximations at the boundaries $Z=Z_{min}$ and $Z=Z_{max}$.
To avoid the difficulty of the nonlinear term in the $p$-PDE, the following first order approximation is used
\begin{align}
e^{\gamma \Pb(t_i,z_j)} \equiv e^{\gamma \Pb^i_j} \simeq e^{\gamma\Pb^{i+1}_{j}} + \gamma e^{\gamma\Pb^{i+1}_{j}} (\Pb^{i}_j-\Pb^{i+1}_{j}).
\end{align}
The resulting implicit tri-diagonal system for $(p^i_j)$ is solved using the Thomas algorithm, see \cite[ Ch 3.5]{strikwerda2004finite}.
In the second half-step, we apply an explicit Euler scheme for the $q$-ODE (second line of \eqref{eq: bpricepde}) using the just-computed $\Pb^i_j$,
\begin{align}
\frac{\Qb^{i+1}_j-\Qb^{i}_j}{\delta_t} -\frac{\nu_{10}}{\gamma}\frac{F_0(t_i)}{F_1(t_i)}e^{-\gamma (\Pb^i_j-\Qb^{i+1}_j)}+\frac{\nu_{10}}{\gamma}-\frac{1}{\gamma}\frac{F_1'(t_i)}{F_1(t_i)}=0,
\end{align}
which can be explicitly solved for $\Qb^i_j$.

\section{General Illiquid Dynamics} \label{app:L}
One could choose to work with \eqref{eq:illiquid-dynamics}, the general case when $\mu_1 \ne 0$, $\sigma_1 \ne 0$, and  $L \ne 0$.
In that case the redefined value functions $\hat{U}^0(t,X,S)$ and $\hat{U}^1(t,X,S,\pi)=e^{-\gamma X}R^1(t,S,\pi)$ still allow separation of wealth, however since $\pi$ is non-constant during the liquidity shock, $R^1$ solves a two-dimensional PDE in $(S,\pi)$
that in turn creates further $\pi$-dependence in the HJB equation satisfied by $R^0$.
With the extra state variable $\pi$, we no longer have a closed-form solution for the Merton problem (i.e.~when $h(S) \equiv 0$) either. 
In analogue to the indifference price definition in (\ref{eq: bpricedef}) we can show that the indifference prices satisfy the HJB equations
(cf.~\eqref{eq: bpricepde})
\begin{align} \label{eq: bpricepde_g}
&\quad \Pb_t + \frac{1}{2}\sigma_0^2 S^2(\Pb_{SS}-\gamma \Pb_S^2)+\mu_0S\Pb_S+\frac{\nu_{01}}{\gamma}-\frac{1}{\gamma}\frac{F_0'}{F_0} \\
+&\sup_{\pi}\Bigl \{\mu_0 \pi-\frac{1}{2}\sigma_0^2\pi^2\gamma-\gamma\sigma_0^2S\Pb_S\pi-   \frac{\nu_{01}}{\gamma}\frac{F_1}{F_0}e^{\gamma \Pb}e^{\gamma L\pi}e^{-\gamma \Qb(t,(1-L)S,(1-L)\pi)}\Bigr \}=0, \nonumber \\
&\quad \Qb_t +\frac{1}{2}\sigma_1^2 S^2(\Qb_{SS}-\gamma \Qb_S^2)+\frac{1}{2}\sigma_1^2 \pi^2(\Qb_{\pi\pi}-\gamma \Qb_\pi^2)+\sigma_1^2 \pi S(\Qb_{s\pi}-\gamma \Qb_S\Qb_\pi)-\gamma\sigma_1^2\pi S\Qb_S\nonumber \\ -&\gamma\sigma_1^2\pi^2\Qb_\pi +\mu_1S\Qb_S+\mu_1\pi\Qb_\pi+\mu_1\pi-\frac{1}{2}\sigma_1^2\pi^2\gamma-\frac{\nu_{10}}{\gamma}\frac{F_0}{F_1}e^{-\gamma (\Pb-\Qb)}+\frac{\nu_{10}}{\gamma}-\frac{1}{\gamma}\frac{F_1'}{F_1}=0, \nonumber
\end{align}
with terminal condition: $\Pb(T,S)=\Qb(T,S,\pi) =h(S)$.
No analytic insights into \eqref{eq: bpricepde_g} seem possible.  If we assume a static illiquid state ($\mu_1=0, \sigma_1=0$), but $L \ne 0$, then the resulting $\Qb^{(L)}=\Qb^{(L)}(t,S)$ is independent of $\pi$ (because $d\pi_t = 0$), and the price equations (\ref{eq: bpricepde_g}) can be simplified to
\begin{align} \label{eq: bpricepde_L}
\Pb^{(L)}_t& + \frac{1}{2}\sigma_0^2 S^2(\Pb^{(L)}_{SS}-\gamma (\Pb^{(L)}_S)^2)+\mu_0S\Pb^{(L)}_S+\frac{\nu_{01}}{\gamma}-\frac{1}{\gamma}\frac{(F_0^{(L)})'}{F_0^{(L)}} \nonumber \\
&+\sup_{\pi}\bigl \{\mu_0 \pi-\frac{1}{2}\sigma_0^2\pi^2\gamma-\gamma\sigma_0^2S\Pb^{(L)}_S\pi-   \frac{\nu_{01}}{\gamma}\frac{F_1^{(L)}}{F_0^{(L)}}e^{\gamma \Pb^{(L)}}e^{\gamma L\pi}e^{-\gamma \Qb^{(L)}(t,(1-L)S)}\bigr \}=0, \nonumber \\
\Qb^{(L)}_t &-\frac{\nu_{10}}{\gamma}\frac{F_0^{(L)}}{F_1^{(L)}}e^{-\gamma (\Pb^{(L)}-\Qb^{(L)})}+\frac{\nu_{10}}{\gamma}-\frac{1}{\gamma}\frac{(F_1^{(L)})'}{F_1^{(L)}}=0.
\end{align}
In \eqref{eq: bpricepde_L} the Merton terms $F_0^{(L)}=F_0^{(L)}(t)$ and $F_1^{(L)}=F_1^{(L)}(t)$ are free of $\pi$ as well, and
are given implicitly via
\begin{align}
&\frac{dF_0^{(L)}}{dt}-\nu_{01}F_0^{(L)}-\sup_{\pi}\Bigl \{\gamma \mu_0 \pi F_0^{(L)}-\frac{1}{2} \sigma_0^2\gamma^2F_0^{(L)}\pi^2-\nu_{01}e^{\gamma L\pi}F_1^{(L)}(t) \Bigr\}=0,\nonumber \\
&\frac{dF_1^{(L)}}{dt}+\nu_{10}(F_1^{(L)}-F_0^{(L)})=0,\quad\qquad F_0^{(L)}(T)=F_1^{(L)}(T)=1.
\end{align}
We hope these derivations convince the reader that this more general route is fraught with tedious complexity and hence provide intuition for the use of our main model in Section \ref{sec:model}.

\bibliographystyle{amsplain}
\bibliography{illiquid}

\end{document}